\theoremstyle{plain}
\newtheorem{theorem}{Theorem}
\newtheorem{lemma}[theorem]{Lemma}
\newtheorem{corollary}[theorem]{Corollary}
\theoremstyle{definition}
\theoremstyle{remark}
\newtheorem{remark}[theorem]{Remark}
\DeclareMathOperator{\Tr}{Tr}
\DeclareMathOperator{\tr}{Tr}
\def\leqslant{\le}
\def\bq{\begin{eqnarray}}
\def\eq{\end{eqnarray}}
\def\bqq{\begin{eqnarray*}}
\def\eqq{\end{eqnarray*}}
\def\nn{\nonumber}
\def\eps{\varepsilon}
\def\wto{\rightharpoonup}
\newcommand{\norm}[1]{\left\lVert #1 \right\rVert}
\renewcommand{\epsilon}{\varepsilon}
\renewcommand{\phi}{\varphi}
\newcommand\1{{\ensuremath {\mathds 1} }}
\def\cF {\mathcal{F}}
\def\cH{\mathcal{H}}
\def\bH{\mathbb{H}}
\def\R {\mathbb{R}}
\def\C {\mathbb{C}}
\def\N {\mathcal{N}}
\def\cL {\mathcal{L}}
\def\cN{\mathcal{N}}
\def\F {\mathcal{F}}
\def\H{\gH}
\def\R {\mathbb{R}}
\def\C {\mathbb{C}}
\def\N {\mathcal{N}}
\def\d{{\rm d}}
\newcommand{\gH}{\mathfrak{H}}
\newcommand\ii{{\ensuremath {\infty}}}
\newcommand\pscal[1]{{\ensuremath{\left\langle #1 \right\rangle}}}
\newcommand{\dGamma}{{\ensuremath{\rm d}\Gamma}}
\title[Fluctuations around Hartree states]{Fluctuations around Hartree states in the mean-field regime}
\author[M. Lewin]{Mathieu Lewin}
\address{CNRS \& Universit\'e de Cergy-Pontoise, Mathematics Department (UMR 8088), F-95000 Cergy-Pontoise, France} 
\email{mathieu.lewin@math.cnrs.fr}
\author[P.~T. Nam]{Phan Th\`anh Nam}
\address{CNRS \& Universit\'e de Cergy-Pontoise, Mathematics Department (UMR 8088), F-95000 Cergy-Pontoise, France} 
\email{phan-thanh.nam@u-cergy.fr}
\author[B. Schlein]{Benjamin Schlein}
\address{Institute for Applied Mathematics, University of Bonn, Bonn, 53115 Germany} 
\email{benjamin.schlein@hcm.uni-bonn.de}
\begin{document}
\date{August 26, 2014}

\begin{abstract} We consider the dynamics of a large system of $N$ interacting bosons in the mean-field regime where the interaction is of order $1/N$. We prove that the fluctuations around the nonlinear Hartree state are generated by an effective quadratic Hamiltonian in Fock space, which is derived from Bogoliubov's approximation. We use a direct method in the $N$-particle space, which is different from the one based on coherent states in Fock space.
\end{abstract}

\maketitle

\setcounter{tocdepth}{2}
\tableofcontents

\section{Introduction}

In this paper, we consider the dynamics of a system of $N$ bosons living in $\R^d$ in the so-called \emph{mean-field regime} and we are interested in the time-dependent fluctuations around the Hartree state solution. 

The $N$-body system is described by a time-dependent wave function $\Psi_N(t,x_1,...,x_N)$ in the symmetric space $\H^N=\bigotimes_{\text{sym}}^N L^2(\R^d)$ and its evolution is governed by Schr\"odinger's equation
\bq \label{eq:Schrodinger-dynamics}
\left\{ \begin{gathered}
 i\, \dot{\Psi}_{N}(t) = H_N \Psi_{N}(t),  \hfill \\
  \Psi_{N}(0)=\Psi_{N,0},\hfill \\ 
\end{gathered}  \right.\eq
or, equivalently,
$$\Psi_N(t)=e^{-it H_N}\Psi_N(0).$$
The many-particle Hamiltonian $H_N$ is given by
\begin{equation} 
H_N= \sum\limits_{j = 1}^N (-\Delta)_{x_j} + \frac{1}{N-1} \sum\limits_{1 \leqslant j < k \leqslant N} {w(x_j-x_k)}, 
\label{eq:def_H_N}
\end{equation}
where the Laplacian $-\Delta_{x_j}$ describes the kinetic energy of the $j$-th particle and $w:\R^d \to \R$ is a measurable, even function describing the interactions between the particles. 

The fact that we are considering the mean-field regime is apparent in the factor $1/(N-1)$ which makes the one-body term and the two-body interaction term of order $N$ in the Hamiltonian $H_N$. We could as well take $1/N$ instead of $1/(N-1)$ but the latter choice simplifies some expressions.

We shall always assume that the interaction potential $w$ satisfies the operator inequality 
\bq \label{eq:asp-w} w^2 \le C(1-\Delta)
\eq
on $L^2(\R^d)$, for some constant $C>0$. The condition (\ref{eq:asp-w}) in particular ensures that $w$ is relatively form bounded by the Laplacian with bound as small as we want:
$$ |w| \le \sqrt{C(1-\Delta)} \le \eps (1-\Delta)+ C \eps^{-1}$$
for every $\eps>0$. The reader may keep in mind the example of $w$ being the Coulomb potential $|x|^{-1}$ in $\R^3$. We may also consider bosons living in a bounded domain of $\R^d$ (with appropriate boundary conditions), or replace the Laplacian $-\Delta$ by the Schr\"odinger operator $-\Delta+V(x)$ with an external potential $V(x)$, or the pseudo-relativistic counterpart $\sqrt{1-\Delta}$. In these cases our results still apply without significant changes (see Remark \ref{rm:ext}).

Under the assumption (\ref{eq:asp-w}), the Hamiltonian $H_N$ is bounded from below. Therefore, it can be defined as a self-adjoint operator (still denoted by $H_N$) by Friedrichs' method ~\cite{ReeSim2}, and this gives a proper meaning to the many-particle Schr\"odinger equation~\eqref{eq:Schrodinger-dynamics}, by Stone's theorem. 

Bosons have the ability to undergo {\em Bose-Einstein condensation}, which means that a macroscopic number of the particles live in a common quantum state $u(t)\in L^2(\R^d)$. It is a very important fact that (complete) Bose-Einstein condensation is stable under the Schr\"odinger flow in the limit $N\to\ii$. More precisely, if the initial datum is a pure Hartree state 
$$\Psi_{N,0}(x_1,...,x_N)=(u_0)^{\otimes N}(x_1,...,x_N)=u_0(x_1)\cdots u_0(x_N),$$
then the many-particle wave function is well described by a Hartree state
$$\Psi_{N}(t)\approx u(t)^{\otimes N}$$ 
for all times $t\ge 0$ in the limit $N\to\ii$, in a sense to be made precise below (and, in particular, \emph{not} in norm!). Heuristically, the mean-field potential experienced by each particle can be approximated by $|u(t)|^2\ast w$. This observation leads to the nonlinear Hartree equation for the condensate wave function $u(t)$:
\bq \label{eq:Hartree-equation}
\left\{ \begin{gathered}
 i\, \dot u(t) =  \big(-\Delta +|u(t)|^2*w -\mu(t)\big) u(t),  \hfill \\
  u(0)=u_{0}.\hfill \\ 
\end{gathered}  \right.
\eq
Here the gauge parameter $\mu(t)\in \mathbb{R}$ can be freely chosen and different choices lead to different phases of $u(t)$. In this paper it will be convenient to choose 
\bq \label{eq:def-mu}
\mu(t):=\frac12\iint_{\R^d\times\R^d}|u(t,x)|^2w(x-y)|u(t,y)|^2\,dx\,dy
\eq
which implies the compatibility of the energies:
\begin{align*}
N \langle u(t), (-\Delta +|u(t)|^2*w/2) u(t) \rangle \approx
\langle \Psi_{N}(t), H_N \Psi_{N}(t)\rangle = \langle i \dot{\Psi}_{N}(t), \Psi_{N}(t)\rangle \\
\approx N  \langle i \dot{u}(t), u(t) \rangle = N  \langle {u}(t), (-\Delta+|u(t)|^2*w -\mu(t))  u(t) \rangle.
\end{align*}
By the usual argument based on Duhamel's formula, one can show that for every initial datum $u_0\in H^1(\R^d)$, the Hartree equation (\ref{eq:Hartree-equation}) admits a unique solution 
$$u(t)\in C^0([0,\infty),H^1(\R^d))\cap C^1([0,\infty),H^{-1}(\R^d))$$
with the gauge constant $\mu(t)$ given in (\ref{eq:def-mu}). Moreover, the norm $\|u(t)\|_{L^2}$ and the energy $\left \langle u(t), \left( -\Delta + |u(t)|^2*w/2\right) u(t) \right \rangle$ are constant in time. 

In fact, the approximation $\Psi_{N}(t)\approx u(t)^{\otimes N}$ holds in the topology of {\em reduced density matrices}. For every $k\geq1$, the $k$-particle density matrix $\gamma_{\Psi_{N}(t)}^{(k)}$ of $\Psi_{N}(t)$ is the trace class operator on the $k$-particle space $\gH^k=\bigotimes_{\rm sym}^kL^2(\R^d)$ with kernel 
\[\begin{gathered}
\gamma_{\Psi_{N}(t)}^{(k)}(x_1,...,x_k; y_1,...,y_k)\hfill \\
   \quad \quad = \int \Psi_{N}(t,x_1, ... , x_N) \overline{\Psi_{N}(t,y_1,...,y_k, x_{k+1}, ... , x_N)}\, \d x_{k+1}... \d x_{N}.\hfill \\ 
\end{gathered} \]
It is known (see, for example, ~\cite{Spohn-80,ErdYau-01,RodSch-09,KnoPic-10,CheLeeSch-11}) that if $\Psi_N(0)=u(0)^{\otimes N}$, then the $k$-particle density matrix of $\Psi_{N}(t)$ converges to that of the Hartree state $u(t)^{\otimes N}$, that is, 
\bq \label{eq:cv-density-matrices}
\lim_{N\to \infty} \Tr_{\gH^k} \Big| \gamma^{(k)}_{\Psi_{N}(t)}- |u(t) \rangle \langle u(t)|^{\otimes k} \Big| =0,
\eq
for every $k\in \mathbb{N}$ and every $t\geq0$. 

We mention that the convergence (\ref{eq:cv-density-matrices}) does not imply the corresponding approximation in the norm topology of $\gH^N$ for the wave function $\Psi_{N}(t)$. We will see later that even if $\Psi_N(0)=u(0)^{\otimes N}$, then for every $t>0$ fixed, $\Psi_{N}(t)$ {\em never} stays close to the Hartree state $u(t)^{\otimes N}$ in the norm of $\gH^N$, except in the non-interacting case $w\equiv0$. 

The aim of our paper is to derive an effective equation which gives the exact behavior of the wave function $\Psi_{N}(t)$ in the norm topology as $N\to\ii$. More precisely, we shall determine continuous mappings $t\mapsto \phi_k(t)\in \bigotimes_{\rm sym}^k\gH_+(t)$, where $\gH_+(t)=\{u(t)\}^\bot \subset \gH$, such that for all times $t \ge 0$, 
\begin{equation}
\boxed{\lim_{N\to\ii}\norm{\Psi_N(t)-\sum_{k=0}^N u(t)^{\otimes (N-k)}\otimes_s\phi_k(t)}_{\gH^N}=0.}
\label{eq:fluctuations-intro}
\end{equation}
The precise statement is given in Theorem \ref{thm:fluctuations} below. The evolution of the family of functions $(\phi_k(t))_{k=0}^\infty$ is governed by an effective quadratic Hamiltonian in Fock space $\F=\bigoplus_{n\ge 0} \gH^n$, which is derived from Bogoliubov's approximation, as will be explained in detail in Section \ref{sec:Bogoliubov}. The convergence (\ref{eq:fluctuations-intro}) is much more precise than (\ref{eq:cv-density-matrices}) and it implies (\ref{eq:cv-density-matrices}) easily (see Corollary~\ref{cor:Hartree} below).

The behavior of the many-particle wave function in the mean-field regime has been studied in several situations but, as far as we know, never close to a Hartree state $u(t)^{\otimes N}$ in the $N$-body space $\gH^N$ like here. The only exception is \cite{BenKirSch-11}, where the fluctuations around the Hartree evolution of factorized initial data are shown to satisfy a central limit theorem. Most previous works focus instead on the description of the fluctuations around a \emph{coherent state in Fock space}, see for example ~\cite{Hepp-74,GinVel-79,GinVel-79b,GriMacMar-10,GriMacMar-11,Chen-12,GriMac-12,BenOliSch-12}. In the mean-field limit, the evolution of a coherent state is also governed by the {\em same} nonlinear Hartree equation~\eqref{eq:Hartree-equation} in the topology of density matrices, but the Bogoliubov Hamiltonian describing the fluctuations is {\em different} from the Hartree case, as we shall explain in Section~\ref{sec:coherent}. 

Note that, in the time-independent setting, a convergence similar to (\ref{eq:fluctuations-intro}) was recently established for eigenvectors of $H_N$ by Lewin, Nam, Serfaty and Solovej in \cite{LewNamSerSol-13}. In fact, some tools from~\cite{LewNamSerSol-13} will be used in this work. This includes, in particular, the unitary operator $U_N(t)$ defined later and a quantitative estimate in truncated Fock spaces. The convergence (\ref{eq:fluctuations-intro}) will then follows from some energy estimates on the Bogoliubov dynamics and on the Schr\"odinger dynamics.

The paper is organized as follows. In Section \ref{sec:main-result} we state precisely our main result. In Section \ref{sec:coherent} we quickly discuss the fluctuations around coherent states for comparison. Then we derive the Bogoliubov Hamiltonian and study its dynamics in Section \ref{sec:Bogoliubov}. The main result is proved in Section \ref{sec:reformulation}.       
 
\bigskip

\noindent\textbf{Acknowledgement.} We acknowledge financial support from the European Research Council under the European Community's Seventh Framework Programme FP7/2007-2013: Grant Agreement MNIQS 258023 (M.L.~and P.T.N.) and Grant Agreement MAQD 240518 (B.S.). We would also like to thank the referee for carefully reading our manuscript and suggesting some improvement.

\section{Main result} \label{sec:main-result}

\noindent{\bf Notation.} Any Hilbert space we consider has an inner product which is conjugate linear in the first variable and linear in the second. We always denote by $C$ a (large) positive constant which depends only on $w$. 
The symmetric tensor product $\Psi_k\otimes_s\Psi_\ell\in\gH^{k+\ell}$ of two symmetric functions $\Psi_k\in \gH^k$ and $\Psi_\ell\in\gH^\ell$ is defined by
\begin{multline*}
\Psi_k\otimes_s\Psi_\ell(x_1,...,x_{k+\ell})\\
=\frac{1}{\sqrt{k!\ell!(k+\ell)!}}\sum_{\sigma\in\mathfrak{S}_{k+\ell}}\Psi_k(x_{\sigma(1)},...,x_{\sigma(k)})\Psi_\ell(x_{\sigma(k+1)},...,x_{\sigma(k+\ell)}).
\end{multline*}

\begin{theorem}[Dynamics of the fluctuations around a Hartree state]\label{thm:fluctuations}
Assume that $w^2 \le C(1-\Delta)$ on $ \gH=L^2(\R^d)$. Let $u_0\in H^1(\R^d)$ with $\|u_0\|_{L^2}=1$ and let $u(t)$ be the unique solution to the Hartree equation~\eqref{eq:Hartree-equation}. Let $\gH_+(t):=\{u(t)\}^\perp \subset \gH$, and $\gH_+^n(t)=\bigotimes_{\rm sym}^n\gH_+(t)$ be the $n$-fold symmetric tensor product.

Consider a sequence of functions $\phi_{n,0}\in \gH_+^n(0)$ such that, for any $\eps > 0$,  
\begin{equation}
\sum_{n=0}^\ii \norm{\phi_{n,0}}_{\gH^n}^2=1 \quad \text{ and} \quad \sum_{n=0}^\ii \, n \, \langle \phi_{n,0} , (1-\Delta_{x_1}) \phi_{n,0} \rangle < \ii.
\label{eq:condition_initial_datum}  
\end{equation}
If $\Psi_N(t)$ is the solution of the Schr\"odinger equation~\eqref{eq:Schrodinger-dynamics} with initial datum
\begin{equation}
\Psi_N(0)=\sum_{n=0}^N u_0^{\otimes (N-n)}\otimes_s\phi_{n,0},
\label{eq:form_initial_datum} 
\end{equation}
then we have for all times $t\ge 0$, 
\begin{equation}
\lim_{N\to\ii}\norm{\Psi_N(t)-\sum_{n=0}^N u(t)^{\otimes (N-n)}\otimes_s\phi_n(t)}_{\gH^N}=0,
\label{eq:fluctuations}
\end{equation}
with $t\mapsto \phi_n(t)\in\gH_+^n(t)$ a continuous map for all $n\geq0$. More precisely, $\Phi(t):=(\phi_n(t))_{n\geq0}$ is the unique solution to the effective Bogoliubov equation in Fock space $\cF=\bigoplus_{n\geq 0}\gH^n:$
\bq \label{eq:Bogoliubov-dynamics}
\left
\{ \begin{gathered}
  i\, \dot{\Phi}(t) = \bH(t)  \Phi(t) \hfill \\
  \Phi(0)=\Phi_0:=(\phi_{n,0})_{n\geq0}, \hfill \\ 
\end{gathered}  
\right.
\eq
where $\bH(t)$ is a quadratic Hamiltonian in Fock space, written in second quantized form
\begin{align} \label{eq:Bogoliubov-Hamiltonian}
\bH(t)=&\int_{\R^d}a^\dagger(x) \big(-\Delta+|u(t)|^2\ast w-\mu(t) + K_1(t) \big)a(x)\,\d x\\
&+\frac12\iint_{\R^d\times\R^d}\Big(K_2(t,x,y)a^\dagger(x)a^\dagger(y)+\overline{K_2(t,x,y)}a(x)a(y)\Big)\d x\,\d y.\nn
\end{align}
Here $K_1(t)$ is the operator $K_1(t)={Q(t)}\tilde K_1(t)Q(t)$ which acts on $L^2(\R^d)$, where $Q(t)=1-|u(t)\rangle\langle u(t)|$ and where $\tilde{K}_1$ is the Hilbert-Schmidt operator on $L^2(\R^d)$ with kernel $\tilde{K}_1(t,x,y)=u(t,x)w(x-y)\overline{u(t,y)}$. On the other hand, $K_2(t)=(Q(t)\otimes Q(t)) \tilde K_2(t)$ is the projection onto $\gH_+(t)\otimes\gH_+(t)$ of the symmetric two-body function $\tilde K_2(t,x,y)=u(t,x)w(x-y)u(t,y)$. 
\end{theorem}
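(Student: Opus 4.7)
The plan is to transport the Schr\"odinger dynamics into the fluctuation Fock space via the time-dependent isometry $U_N(t):\gH^N \to \cF_+^{\leq N}(t):=\bigoplus_{n=0}^N \gH_+^n(t)$ from~\cite{LewNamSerSol-13}, uniquely characterized by
\begin{equation*}
U_N(t)\Big(\sum_{n=0}^N u(t)^{\otimes(N-n)}\otimes_s\phi_n\Big) = (\phi_0,\phi_1,\ldots,\phi_N), \quad \phi_n\in\gH_+^n(t).
\end{equation*}
Since $U_N(t)$ is an isometry and $U_N(0)\Psi_N(0) = \1_{\leq N}\Phi(0)$, the desired convergence (\ref{eq:fluctuations}) is equivalent to $\|U_N(t)\Psi_N(t) - \1_{\leq N}\Phi(t)\|_{\cF} \to 0$, where $\1_{\leq N}$ is the projection onto the first $N+1$ sectors of $\cF$. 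The tail $\|\1_{>N}\Phi(t)\|_{\cF}^2 \leq N^{-1}\langle \Phi(t),\N\Phi(t)\rangle$ will be handled by an a priori $L^\infty_{\loc}$ bound on $\langle \Phi(t),\N\Phi(t)\rangle$ extracted from the Bogoliubov dynamics and the regularity (\ref{eq:condition_initial_datum}).

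Next, I would derive the effective generator on $\cF_+^{\leq N}(t)$ satisfied by $\Phi_N(t):=U_N(t)\Psi_N(t)$, namely
\begin{equation*}
i\,\partial_t \Phi_N(t) = \cH_N(t)\Phi_N(t), \qquad \cH_N(t) := i[\partial_t U_N(t)]U_N(t)^* + U_N(t) H_N U_N(t)^*.
\end{equation*}
Applying the substitution rules of~\cite{LewNamSerSol-13} for $U_N a^\#(f)U_N^*$, the Hartree equation (\ref{eq:Hartree-equation}) together with the gauge choice (\ref{eq:def-mu}), the constant and linear parts of $\cH_N(t)$ cancel and one obtains a decomposition $\cH_N(t) = \bH(t)\big|_{\cF_+^{\leq N}(t)} + R_N(t)$, where $\bH(t)$ is exactly (\ref{eq:Bogoliubov-Hamiltonian}) and the remainder $R_N(t)$ collects cubic and quartic terms in the fluctuation creation/annihilation operators with prefactors $(N-1)^{-1/2}$, $(N-1)^{-1}$, together with multiplicative corrections of the form $\sqrt{(N-\N_+)/(N-1)} -1$ that act only near the top sector of $\cF_+^{\leq N}(t)$.

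The third step is to establish a priori bounds, uniform in $N$ and locally in $t$, of the form
\begin{equation*}
\langle \Phi_N(t),(\N_+ + \dGamma(-\Delta))\Phi_N(t)\rangle + \langle \Phi(t),(\N_+ + \dGamma(-\Delta))\Phi(t)\rangle \leq C(t).
\end{equation*}
On the Schr\"odinger side, this rests on the conservation of $\langle\Psi_N(t),H_N\Psi_N(t)\rangle$, on the fact that $U_N(t)$ conjugates $H_N - N\mu(t)$ (up to a harmless scalar) to $\bH(t)+R_N(t)$, and on the relative bound $w^2\leq C(1-\Delta)$ used to absorb $R_N(t)$ by a small fraction of $\dGamma(-\Delta)$. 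On the Bogoliubov side, one differentiates $\langle \Phi(t),(\N_+ + \dGamma(-\Delta))\Phi(t)\rangle$, controls the commutators with $\bH(t)$ via (\ref{eq:asp-w}) and the uniform bound $\|u(t)\|_{H^1}\leq C$ given by the Hartree flow, and closes by Gr\"onwall starting from the initial regularity (\ref{eq:condition_initial_datum}).

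Finally, setting $E_N(t):=\Phi_N(t)-\1_{\leq N}\Phi(t)$, the self-adjointness of $\cH_N(t)$ applied to
\begin{equation*}
i\,\partial_t E_N(t) = \cH_N(t) E_N(t) + R_N(t)\,\1_{\leq N}\Phi(t) + [\bH(t),\1_{\leq N}]\Phi(t)
\end{equation*}
yields
\begin{equation*}
\tfrac{d}{dt}\|E_N(t)\|_{\cF} \leq \|R_N(t)\,\1_{\leq N}\Phi(t)\|_{\cF} + \|[\bH(t),\1_{\leq N}]\Phi(t)\|_{\cF}.
\end{equation*}
The commutator $[\bH(t),\1_{\leq N}]\Phi(t)$ only feels the sectors near index $N$ of $\Phi(t)$ and is $O(N^{-1/2})$ by the $\dGamma(-\Delta)$-bound of Step~3. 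Since $E_N(0)=0$, integration concludes. The principal obstacle is controlling $R_N(t)\,\1_{\leq N}\Phi(t)$: the cubic and quartic pieces of $R_N(t)$ are unbounded on $\cF$, so the $(N-1)^{-1/2}$, $(N-1)^{-1}$ prefactors must be combined with operator estimates that trade one fluctuation creation operator for the kinetic energy $\dGamma(-\Delta)$, using precisely $w^2\leq C(1-\Delta)$; the $\sqrt{(N-\N_+)/(N-1)}-1$ corrections then contribute only an additional $O(N^{-1})\N_+$-type error that the a priori bound of Step~3 is exactly strong enough to absorb.
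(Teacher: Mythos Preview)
Your overall architecture is essentially the paper's: transport by $U_N(t)$, identify $\cH_N(t)=\bH(t)+R_N(t)$, prove a priori bounds, and run a Gronwall argument. The gap lies in the final step.

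You pass from the energy identity to the pointwise bound
\[
\tfrac{d}{dt}\|E_N(t)\|_{\cF}\le \|R_N(t)\1_{\le N}\Phi(t)\|_{\cF}+\|[\bH(t),\1_{\le N}]\Phi(t)\|_{\cF},
\]
and then try to show that both norms on the right tend to zero. Neither does, under the hypothesis \eqref{eq:condition_initial_datum}. For the quartic piece $R_{2,N}=\frac{1}{2(N-1)}\sum_{i<j}A_{ij}$ one has, on the $n$-particle sector, $\|R_{2,N}\Phi^{(n)}\|\le \frac{n(n-1)}{2(N-1)}\|A_{12}\Phi^{(n)}\|$, and hence
\[
\|R_{2,N}\1_{\le N}\Phi\|^2\;\lesssim\;\frac{1}{N^2}\sum_{n\le N}n^3\cdot n\,\|A_{12}\Phi^{(n)}\|^2.
\]
The a priori information you have is only $\sum_n n\|A_{12}\Phi^{(n)}\|^2\le C\langle\Phi,\dGamma(1-\Delta)\Phi\rangle<\infty$, which does \emph{not} force the weighted sum with the extra $n^3$ to be $o(N^2)$; it is easy to build admissible $\Phi$ for which $\|R_{2,N}\1_{\le N}\Phi\|$ diverges. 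The same objection applies to the cubic pieces of $R_{1,N}$ (one extra power of $n$) and to your commutator term: $[\bH,\1_{\le N}]\Phi$ lives on sectors $N{-}1,N,N{+}1,N{+}2$ and the pairing part contributes $\sim N\|\Phi^{(N)}\|$, which is not $O(N^{-1/2})$ from $\langle\Phi,\N\Phi\rangle<\infty$ alone.

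The paper avoids this by never estimating $\|R_N\Phi\|$. It differentiates $\|\Phi_N-\Phi\|^2$ directly (no truncation on $\Phi$), which collapses to the \emph{bilinear} quantity $-2\,\mathrm{Im}\,\langle\Phi_N,R_N\Phi\rangle$; one factor of the offending $n$'s is then placed on $\Phi_N$ via Cauchy--Schwarz and controlled by $\langle\Phi_N,\N\Phi_N\rangle$ alone (Lemma~\ref{le:N+-bound}), while the rest lands on $\Phi$ and is handled by $\langle\Phi,\dGamma(1-\Delta)\Phi\rangle$. An intermediate cutoff $\1_{\le M}$ on $\Phi$, with $M\to\infty$ after $N\to\infty$, finishes the job for $R_{1,N}$ (via the quadratic--form bound of Lemma~\ref{le:remainder-bound}) and for $R_{2,N}$ (via the sector--wise Cauchy--Schwarz splitting). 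Your Steps~1--3 are fine; in Step~4, replace the norm bound on the source terms by this bilinear estimate, and drop the $\1_{\le N}$ on $\Phi$ so that no commutator $[\bH,\1_{\le N}]$ appears.
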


Here $a^\dagger(x)$ and $a(x)$ are the usual creation and annihilation operators (see Section \ref{sec:coherent} for definitions). The Bogoliubov Hamiltonian $\bH(t)$ and the effective equation \eqref{eq:Bogoliubov-dynamics} will be discussed in detail in Section~\ref{sec:Bogoliubov} below. It is not obvious from the definition of $\bH(t)$ that 
$$\Phi(t)\in \F_+(t):=\bigotimes_{n\ge 0} \gH_+^n(t) \subset \F$$
for all times, but this fact will be proved in the sequel. Note that the Bogoliubov Hamiltonian $\bH(t)$ does {\em not} preserve the subspace $\F_+(t)$ due to the term involving the Hartree mean-field Hamiltonian
$$h(t):=-\Delta+|u(t)|^2\ast w-\mu(t)$$
which itself does not preserve $\gH_+(t)$. In fact, for all times $t$, the vector $\bH(t) \Phi(t)$ belongs to the space tangent to the manifold $\{ \tilde \Phi \in \F_+(t),\ \|\tilde \Phi\|_\F=1\}$ at the point $\Phi(t)$, which is the correct condition to have $\Phi(t)\in\cF_+(t)$.

It is possible to rewrite~\eqref{eq:Bogoliubov-dynamics} as an infinite system of coupled linear equations:
\begin{equation*}
 \left\{\begin{array}{rcl}
i\dot \phi_0(t)&=& \displaystyle\sqrt2\iint_{\R^d\times\R^d}\overline{K_2(t,x_1,x_2)}\phi_2(t,x_1,x_2)\,\d x_1\,\d x_2\\[0.3cm]
i\dot \phi_1(t,x_1)&=& (h(t)+K_1(t))\phi_1(t,x_1)\\[0.3cm]
&&~+\displaystyle\sqrt6\iint_{\R^d\times\R^d}\overline{K_2(t,x_2,x_3)}\phi_3(t,x_1,x_2,x_3)\d x_2\,\d x_3\\[0.3cm]
i\dot \phi_2(t,x_1,x_2)&=& \big((h(t)+K_1(t))_{x_1}+(h(t)+K_1(t))_{x_2}\big)\phi_2(t,x_1,x_2)\\[0.3cm]
&&~+ \displaystyle \sqrt2\, K_2(t,x_1,x_2)\phi_0(t)\\[0.3cm]
&&~+\displaystyle\sqrt{12}\iint_{\R^d\times\R^d}\overline{K_2(t,x_3,x_4)}\phi_4(t,x_1,x_2,x_3,x_4)\d x_3\,\d x_4\\[0.3cm]
&\vdots&
\end{array}
\right.
\end{equation*}
In words, each $\phi_n(t)$ evolves with the particle-preserving Hamilton operator $\sum_{j=1}^n (h(t)+K_1(t))_{x_j}$ and it is coupled to its second left and second right neighbors $\phi_{n-2}(t)$ and $\phi_{n+2}(t)$, through the function $K_2(t,x,y)$. Note that the even functions $\phi_{2j}(t)$ and odd functions $\phi_{2j+1}(t)$ are not coupled. 

The reader should think of the simplest case where $\phi_{0,0}=1$ and $\phi_{n,0}=0$ for all $n\geq1$, which means that the initial datum is the Hartree state 
$$\Psi_N(0)=u_0^{\otimes N}$$ 
for all $N$. In this case the condition~\eqref{eq:condition_initial_datum} boils down to $\|u_0\|_{L^2}=1$ and $u_0\in H^1(\R^d)$. Then~\eqref{eq:fluctuations} gives the exact behavior of the wave function $\Psi_N(t)$ for large $N$, in terms of the ($N$-independent) functions $\phi_n(t)$ which describe the fluctuations around it. Note that in this simple case, it can be verified that the odd functions $\phi_{2j+1}(t)\equiv0$ for all $j$ but the even functions $\phi_{2j}(t)$ are all different from 0 for all $j$ and $t>0$ small enough, provided that $K_2(0)\neq0$. For instance the derivative of $\phi_2$ does not vanish at time zero,
$$\dot\phi_2(0)=- \frac{i}2 K_2(0)\neq 0,$$
and therefore $\phi_2(t)\neq0$ for small times. We deduce from~\eqref{eq:fluctuations} that $\Psi_N(t)$ does \emph{not} stay close to the pure Hartree state $u(t)^{\otimes N}$ in norm in the limit $N\to\ii$. In fact, in the expansion in~\eqref{eq:fluctuations} there are infinitely many terms in the series which do not vanish if $K_2(0)\neq0$, and this is always the case except when $w\equiv 0$.

Finally, we remark that the estimate~\eqref{eq:fluctuations} is much stronger than the convergence of the density matrices. This is because any of the terms $u(t)^{\otimes (N-n)}\otimes_s\phi_n(t)$ in the series has most of its particles condensed into the Hartree state $u(t)$ and only $n$ excited particles outside of the condensate. The functions $\phi_n(t)$ do not participate to the first order of the density matrices in the limit $N\to\ii$. We quickly explain this now for clarity.

\begin{corollary}[Convergence of density matrices]\label{cor:Hartree}
Under the assumptions of Theorem~\ref{thm:fluctuations}, we have
\begin{equation}\label{eq:tr-norm} \lim_{N\to\ii}\tr\left|\gamma^{(k)}_{\Psi_N(t)}-|u(t)\rangle\langle u(t)|^{\otimes k} \right|=0 \end{equation}
for every fixed $k\geq1$ and $t\in\R^+$.
\end{corollary}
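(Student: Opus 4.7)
The plan is to reduce the claim to a direct combinatorial computation on the approximating sum from Theorem~\ref{thm:fluctuations}. Set $\tilde\Psi_N(t) := \sum_{n=0}^N u(t)^{\otimes (N-n)}\otimes_s\phi_n(t)$, so that~\eqref{eq:fluctuations} reads $\|\Psi_N(t)-\tilde\Psi_N(t)\|_{\gH^N}\to 0$. Writing $\gamma^{(k)}_\Psi$ as the partial trace $\Tr_{N-k}|\Psi\rangle\langle\Psi|$ and using that the partial trace is a contraction on the trace class, together with the elementary inequality $\bigl\||\Psi\rangle\langle\Psi|-|\tilde\Psi\rangle\langle\tilde\Psi|\bigr\|_1 \leq (\|\Psi\|+\|\tilde\Psi\|)\|\Psi-\tilde\Psi\|$, one has $\|\gamma^{(k)}_{\Psi_N(t)}-\gamma^{(k)}_{\tilde\Psi_N(t)}\|_1\to 0$. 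So it suffices to show $\gamma^{(k)}_{\tilde\Psi_N(t)} \to |u(t)\rangle\langle u(t)|^{\otimes k}$ in trace norm.

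The key observation is the pairwise orthogonality of the summands making up $\tilde\Psi_N(t)$: since each coordinate of $\phi_n(t)\in\gH_+^n(t)$ is orthogonal to $u(t)$, a short Wick-type expansion yields the norm identity $\|u(t)^{\otimes(N-n)}\otimes_s\phi_n(t)\|_{\gH^N}^2 = \|\phi_n(t)\|_{\gH^n}^2$ and the orthogonality of distinct summands, so
\[
\|\tilde\Psi_N(t)\|^2 = \sum_{n=0}^N \|\phi_n(t)\|^2 \;\xrightarrow[N\to\infty]{}\; \sum_{n=0}^\infty \|\phi_n(t)\|^2 = 1
\]
by unitarity of the Bogoliubov flow on $\cF$. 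Moving to second quantization, $a^\dagger(u(t))a(u(t))$ counts particles in the condensate and thus acts on the $n$-th summand as multiplication by $N-n$, so the falling factorial $a^\dagger(u(t))^k a(u(t))^k$ acts on it as multiplication by $(N-n)!/(N-n-k)!$. Combined with the identity $\langle u(t)^{\otimes k},\gamma^{(k)}_\Psi u(t)^{\otimes k}\rangle = \frac{(N-k)!}{N!}\langle\Psi, a^\dagger(u(t))^k a(u(t))^k\Psi\rangle$ for $\Psi\in\gH^N$, cross-terms drop by orthogonality and I obtain
\[
\langle u(t)^{\otimes k},\gamma^{(k)}_{\tilde\Psi_N(t)} u(t)^{\otimes k}\rangle \;=\; \sum_{n=0}^{N-k}\prod_{j=0}^{k-1}\frac{N-n-j}{N-j}\,\|\phi_n(t)\|^2 \;\xrightarrow[N\to\infty]{}\; 1,
\]
by dominated convergence on $\ell^1$, since each factor lies in $[0,1]$ and tends to $1$ for each fixed $n$.

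To conclude, I would invoke the elementary fact that if $\gamma_N$ is a sequence of positive trace class operators with $\tr\gamma_N\to 1$ and $\langle\xi,\gamma_N\xi\rangle\to 1$ for a unit vector $\xi$, then $\gamma_N\to |\xi\rangle\langle\xi|$ in trace norm. Indeed, decomposing $\gamma_N$ into four blocks relative to $P=|\xi\rangle\langle\xi|$ and $Q=1-P$, positivity yields $\|Q\gamma_N Q\|_1 = \tr(Q\gamma_N Q) = \tr\gamma_N - \tr(P\gamma_N P)\to 0$, the off-diagonal blocks are then controlled by Cauchy--Schwarz ($\|P\gamma_N Q\|_1\leq \|P\gamma_N^{1/2}\|_2\,\|\gamma_N^{1/2}Q\|_2$), and $P\gamma_N P = \langle\xi,\gamma_N\xi\rangle P \to P$. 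Applying this to $\gamma^{(k)}_{\tilde\Psi_N(t)}$ with $\xi=u(t)^{\otimes k}$ and combining with the first step closes the argument. I do not foresee any serious obstacle: all the heavy lifting has been carried out in Theorem~\ref{thm:fluctuations}, and what remains is orthogonality bookkeeping for $u(t)^{\otimes(N-n)}\otimes_s\phi_n(t)$ together with the standard continuity of $\Psi\mapsto\gamma^{(k)}_\Psi$.
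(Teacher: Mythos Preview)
Your proposal is correct and follows essentially the same route as the paper: reduce to $\tilde\Psi_N(t)$ by continuity of $\Psi\mapsto\gamma^{(k)}_\Psi$, exploit the orthogonality of the summands to compute $\langle u(t)^{\otimes k},\gamma^{(k)}_{\tilde\Psi_N(t)}u(t)^{\otimes k}\rangle$, pass to the limit by dominated convergence, and upgrade to trace-norm convergence via the standard rank-one argument. The only difference is presentational: the paper spells this out for $k=1$ and then remarks that higher $k$ follows by the same argument (or by the general fact that convergence of $\gamma^{(1)}$ to a pure state implies convergence of all $\gamma^{(k)}$), whereas you carry out the falling-factorial computation for arbitrary $k$ directly and make the trace-norm upgrade explicit; you are also slightly more careful about $\tilde\Psi_N(t)$ not being exactly normalized.
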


\begin{proof}
Recall that for any normalized $\Psi,\Psi'\in\gH^N$
\begin{equation}
\tr\left|\gamma^{(1)}_{\Psi}-\gamma^{(1)}_{\Psi'}\right|\leq 2\norm{\Psi-\Psi'},
\label{eq:estimate_DM} 
\end{equation}
which follows from the estimate
\begin{align*}
\left|\tr\big(A(\gamma^{(1)}_{\Psi}-\gamma^{(1)}_{\Psi'})\big)\right|&=\left|\pscal{\Psi,A\otimes 1_{\gH^{N-1}}\Psi}-\pscal{\Psi',A\otimes 1_{\gH^{N-1}}\Psi'}\right|\\
&\leq 2\norm{A}\,\norm{\Psi-\Psi'}
\end{align*}
for any bounded operator $A$. Combining~\eqref{eq:fluctuations} and~\eqref{eq:estimate_DM} we deduce that
\bq \label{eq:density-cv-1}
\lim_{N\to\ii}\tr\left|\gamma^{(1)}_{\Psi_N(t)}-\gamma^{(1)}_{\sum_{n=0}^N u(t)^{\otimes (N-n)}\otimes_s\phi_n(t)}\right|=0.
\eq
Now we compute, with $P(t)=|u(t)\rangle\langle u(t)|$, 
\begin{align*}
&\pscal{u(t),\gamma^{(1)}_{\sum_{n=0}^N u(t)^{\otimes (N-n)}\otimes_s\phi_n(t)} u(t)}\\
&\qquad=\sum_{n,m=0}^N\pscal{u(t)^{\otimes (N-n)}\otimes_s\phi_n(t),\left(P(t)\otimes 1_{\gH^{N-1}}\right)u(t)^{\otimes (N-m)}\otimes_s\phi_m(t)}\\
&\qquad=\sum_{n=0}^N\left(1-\frac{n}{N}\right)\norm{\phi_n(t)}^2_{\gH^n}
\end{align*}
where we have used the fact that $\phi_n(t)\in\gH_+^N(t)$ for all times. The last term tends to 1 by the dominated convergence theorem for series (actually it will be proved later that $\sum n\|\phi_n(t)\|^2$ stays finite for all times, which even gives the convergence rate $C(t)N^{-1}$). 
It follows that
\bq \label{eq:density-cv-2}
 \lim_{N\to \infty} \tr\, \left| \gamma^{(1)}_{\sum_{n=0}^N u(t)^{\otimes (N-n)}\otimes_s\phi_n(t)} - |u(t)\rangle\langle u(t)| \right| = 0. \eq
From (\ref{eq:density-cv-1}) and (\ref{eq:density-cv-2}) we obtain the convergence of the one-particle density matrices. The result for higher density matrices follows from a similar argument (or follows directly from the convergence of the one-particle density matrices ~\cite[Corollary 2.4]{LewNamRou-13}).
\end{proof}

\begin{remark}[Convergence rates]\label{rmk:rate}
In this work, we do not focus on convergence rates. However, using the techniques that we will discuss in the next sections, one can also obtain a quantitative bound for the difference in (\ref{eq:fluctuations}). The optimal bound is of the order $N^{-1/2}$, for every fixed $t \in \R$, and can be proved proceeding similarly as we do below, with the difference that instead of bounding only the expectation of $\cN$, in Lemma \ref{le:N+-bound} we would need an estimate for the expectation of higher power of $\cN$ (correspondingly, one needs a stronger assumption on the expectation of powers of $\cN$ in the initial data). Such estimates can be established similarly as in the proof of Lemma \ref{le:N+-bound}. Although a bound of the order $N^{-1/2}$ for the fluctuation in (\ref{eq:fluctuations}) immediately implies a bound of the same order for the reduced densities in (\ref{eq:tr-norm}), the optimal estimate for the latter is proportional to $N^{-1}$ (see \cite{CheLeeSch-11}).
\end{remark}

\begin{remark}[Extensions] \label{rm:ext} We can consider not only the typical Hamiltonian $H_N$ in (\ref{eq:def_H_N}) but also many other cases.
\begin{itemize}

\item[(i)] (External potential) We can add to the Laplacian $-\Delta$ in $H_N$ an external potential $V(x)$ which satisfies 
$$V \in L_{\rm loc}^{d/2} (\R^d,\R) \quad {\text{and}}~~-V(x) \le \alpha (-\Delta)+C$$
for some $0\le \alpha<1$. Note that $V(x)$ is allowed to grow at infinity, which corresponds to a confined system. The potential $V(x)$ appears in the Bogoliubov Hamiltonian in a natural way. Our results in Theorem \ref{thm:fluctuations} still hold true provided that $w^2 \le C(1-\Delta)$ and that the second condition in (\ref{eq:condition_initial_datum}) on the initial datum is replaced by   
$$
\sum_{n=0}^\ii \, n \, \langle \phi_{n,0} , (1-\Delta_{x_1}+|V(x_1)|) \phi_{n,0} \rangle < \ii.
$$

\item[(ii)] (Boson stars) We can also consider bosons living in $\R^3$ with the one-body pseudo-relativistic kinetic operator $\sqrt{1-\Delta}$ and the Newton interaction $w(x-y)=-\kappa |x-y|^{-1}$, $0<\kappa<2/\pi$. If $u_0\in H^1(\R^3)$, then the Hartree dynamics $u(t)$ is well-posed in $H^1(\R^3)$ (see \cite{Lenzmann-07}) and our results in Theorem \ref{thm:fluctuations} still hold true under the same $H^1$-condition (\ref{eq:condition_initial_datum}) on the initial datum.
\end{itemize}
\end{remark}

\section{Comparison with coherent states}\label{sec:coherent}

One common approach to derive Hartree's nonlinear equation from the many-body dynamics is to work in Fock space and to use coherent states. We quickly recall this method here and we compare it with our main result. The Fock space is
$$\cF=\C\oplus\bigoplus_{n\geq1}\gH^n$$
and it is useful to introduce the creation and annihilation operators $a^\dagger(f)$ and $a(g)$ which are unbounded operators acting on $\cF$. More precisely, $a^\dagger(f)$ is defined by
$$a^\dagger(f)\left(\phi_0\oplus\bigoplus_{n\geq1} \phi_n\right)=0\oplus \phi_0 f \oplus\bigoplus_{n\geq1} \big(f\otimes_s \phi_n\big)$$
and $a(f)$ is its adjoint. The operator-valued distributions $a^\dagger(x)$ and $a(x)$ used in (\ref{eq:Bogoliubov-Hamiltonian}) may be defined so that  
\bq \label{eq:af-ax}
 a^\dagger (f)=\int _{\R^d} f(x)a^\dagger(x) \d x \quad {\rm and}\quad a(f)=\int _{\R^d} \overline{f(x)}a^\dagger(x) \d x
 \eq
for all $f\in \gH$. These operators satisfy the canonical commutation relations
\begin{align} 
\label{eq:CCR-af-ag}
[a(f),g(g)] &=[a^\dagger(f),a^\dagger(g)]=0,\quad [a(f),a^\dagger(g)]=\langle f,g\rangle_{\gH} \\
\label{eq:CCR_ax-ay}
[a(x),a(y)]&=[a^\dagger(x),a^\dagger(y)]=0,\quad [a(x),a^\dagger(y)]=\delta(x-y)
\end{align}
for all $f,g\in \gH$ and for all $x,y\in \R^d$. 

The \emph{Weyl unitary operator} is defined by
$$W(f):=\exp\big(a^\dagger(f)-a(f)\big)$$
and a \emph{coherent state} is a Weyl-rotation of the vacuum $\Omega=1\oplus0\cdots$:
$$W(f)\Omega=e^{-\norm{f}^2/2}\sum_{n\geq0}\frac{1}{\sqrt{n!}}f^{\otimes n}.$$
The average particle number of a coherent state is, therefore,
$$e^{-\norm{f}^2}\sum_{n\geq0}\frac{n}{n!}\norm{f}^{2 n}=\norm{f}^2.$$

Instead of starting the many-particle dynamics in the neighborhood of a Hartree state in $\gH^N$, one can also use a coherent state in Fock space. The analysis is simplified by the useful algebraic properties of coherent states. For example, coherent states are eigenvectors of annihilation operators, which follows from the action of the Weyl rotation $W(f)$ on the creation and annihilation operators  
\bq \label{eq:Weyl-action}
W(f)^* a^\dagger(g) W(f)=a^\dagger(g) + \langle f,g \rangle, \quad W(f)^* a(g) W(f)=a(g) + \langle g,f \rangle.
\eq

For a precise discussion on the dynamics in Fock space, let us introduce the second quantized forms of operators. If $A$ is a one-body operator on $\gH$ with kernel $A(x,y)$, then its second quantized form $d\Gamma(A)$ is an operator on $\F$ defined by  
\bq \label{eq:2nd-dG1}
d\Gamma(A):= \bigoplus_{n\ge 1} \sum_{j=1}^n A_j= \iint_{\R^d\times \R^d} a^\dagger(x) A(x,y) a(y) \,\d x\,\d y,
\eq
which is frequently written as $\int a^\dagger(x) A a(x) \,dx$ for short. For example, $\N:=d\Gamma(\1_\gH)$ is the number operator. Similarly, if $W$ is a symmetric operator on $\gH^2$ with kernel $W(x,y;x',y')$, then its second quantized form is given by
\begin{align}\label{eq:2nd-dG2}
&\bigoplus_{n\ge 2} \bigg( \sum_{1\le j<k \le n} W_{j,k} \bigg)\\
&\quad \quad = \frac{1}{2}\iiiint W(x,y;x',y')a^\dagger(x)a^\dagger(y)a(x') a(y') \,\d x\,\d y\,\d x'\,\d y'. \nn
\end{align}
In particular, the formal kernel of the multiplication operator $w(x-y)$ on $\gH^2$ is $w(x-y)\delta_{(x,y)}(x',y')$ and the second quantized form of $H_N$ is an operator on Fock space:
\begin{equation}
\bH_N=d\Gamma(-\Delta) +\frac{1}{2(N-1)}\iint_{\R^d\times\R^d} w(x-y)a^\dagger (x)a^\dagger (y) a(x) a(y) \, \d x \, \d y \label{eq:bH_N}\\
\end{equation}
which coincides with $H_N$ in the sector $\gH^N$. Because of the constant in front of the interaction which depends on $N$ and not on the number operator $\cN$, the restriction of $\bH_N$ to another $k$-particle subspaces is not related to $H_k$.
Under certain assumptions on $w$, for example (\ref{eq:asp-w}), the dynamics generated by $\bH_N$ is well-defined on the quadratic form domain of $d\Gamma(1-\Delta)$.

It was first proved by Hepp \cite{Hepp-74} for regular interactions, and then extended by Ginibre-Velo \cite{GinVel-79,GinVel-79b} for singular interactions (see also~\cite{RodSch-09,GriMacMar-10,GriMacMar-11}) that if the initial state (in Fock space) is close to a coherent state $W(\sqrt{N} u_0)\Omega$, then the dynamics generated by $\bH_N$ stays close to $W(\sqrt{N} u(t))\Omega$ for all $t\ge 0$ in the sense of density matrices and it is also possible to identify the fluctuations. In our typical setting, this  result can be stated as follows.

\begin{theorem}[Dynamics of the fluctuations around a coherent state]\label{thm:fluctuations_coherent}
Assume that $w^2 \le C(1-\Delta)$. Let $u_0\in H^1(\R^d)$ with $\|u_0\|_{L^2}=1$ and let $u(t)$ be the unique solution to the nonlinear Hartree equation~\eqref{eq:Hartree-equation}.

Consider an initial datum $\Psi_{N,0}$ in Fock space which is such that 
$$W(\sqrt{N} u_0)^*\Psi_{N,0} \to \Xi_0$$
strongly in $\F$ and weakly in $Q(\dGamma(1-\Delta))$. Let $\Psi_N(t)=\exp(-it\bH_N)\Psi_{N,0}$ be the solution of the Schr\"odinger equation in $\cF$, with initial datum $\Psi_{N,0}$. Then we have for all times $t\ge 0$:
\begin{equation}
\lim_{N\to\ii}W(\sqrt{N} u(t))^*\Psi_N(t) = \Xi(t),
\label{eq:fluctuations_coherent}
\end{equation}
strongly in $\cF$ and weakly in $Q(\dGamma(1-\Delta))$, where $\Xi(t)$ is the unique solution to the effective equation in Fock space
\bq \label{eq:Bogoliubov-dynamics_coherent}
\left
\{ \begin{gathered}
  i\, \dot{\Xi}(t) = \tilde\bH(t)  \Xi(t) \hfill \\
  \Xi(0)=\Xi_0. \hfill \\ 
\end{gathered}  
\right.
\eq
Here $\tilde \bH(t)$ is a quadratic Hamiltonian on $\cF$, 
 written in second quantized form
\begin{align*}
\tilde\bH(t)=&\int_{\R^d}a^\dagger(x) \big( -\Delta +|u(t)|^2\ast w-\mu(t) + \tilde K_1(t) \big)a(x)\,\d x\nn\\
&+\frac12\iint_{\R^d\times\R^d}\Big(\tilde K_2(t,x,y)a^\dagger(x)a^\dagger(y)+\overline{\tilde K_2(t,x,y)}a(x)a(y)\Big)\d x\,\d y,
\end{align*}
with $\tilde K_1(t,x,y)=u(t,x)w(x-y)\overline{u(t,y)}$, $\tilde K_2(t,x,y)=u(t,x)w(x-y)u(t,y)$.
\end{theorem}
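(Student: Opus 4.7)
I would study the \emph{fluctuation dynamics}
\begin{equation*}
\mathcal{U}_N(t) := W(\sqrt{N}\,u(t))^*\, e^{-it\bH_N}\, W(\sqrt{N}\,u_0),
\end{equation*}
which is the unitary on $\cF$ obtained by factoring out the classical coherent evolution along the Hartree orbit $u(t)$. In this formulation, \eqref{eq:fluctuations_coherent} is equivalent to the strong convergence $\mathcal{U}_N(t)\Xi_0 \to \mathcal{U}(t,0)\Xi_0$ in $\cF$, where $\mathcal{U}(t,0)$ denotes the unitary two-parameter group generated by $\tilde\bH(t)$. The well-posedness of this Bogoliubov flow is a special case of the analysis of \eqref{eq:Bogoliubov-dynamics} carried out in Section~\ref{sec:Bogoliubov}, since $\tilde\bH(t)$ differs from $\bH(t)$ only in the absence of the projections $Q(t)$. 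Once the convergence of $\mathcal{U}_N(t)$ is established, the theorem follows at once from the identity $W(\sqrt N\, u(t))^*\Psi_N(t) = \mathcal{U}_N(t)\,[W(\sqrt N\, u_0)^*\Psi_{N,0}]$ together with unitarity of $\mathcal{U}(t,0)$.

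\textbf{Deriving the generator.} Differentiating $\mathcal{U}_N(t)$ in time and applying the shift identity~\eqref{eq:Weyl-action} to each $a^\dagger, a$ appearing in $\bH_N$, one expands $W(\sqrt N\, u(t))^*\bH_N W(\sqrt N\, u(t))$ as a polynomial of degree four in $\sqrt N$. The contribution of order $N$ is a scalar proportional to the Hartree energy and it cancels against the time derivative $i[\partial_t W(\sqrt N\, u(t))^*]W(\sqrt N\, u(t))$ thanks to the choice~\eqref{eq:def-mu} of $\mu(t)$. The contribution of order $\sqrt N$ is linear in $a^\dagger, a$ and vanishes identically because $u(t)$ solves the Hartree equation~\eqref{eq:Hartree-equation}. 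The order $N^0$ terms reproduce exactly the quadratic Hamiltonian $\tilde\bH(t)$, leaving a cubic remainder $\mathcal{R}_3(t)/\sqrt N$ and a quartic remainder $\mathcal{R}_4(t)/N$, so that
\begin{equation*}
i\partial_t \mathcal{U}_N(t) = \Big(\tilde\bH(t) + N^{-1/2}\mathcal{R}_3(t) + N^{-1}\mathcal{R}_4(t)\Big)\,\mathcal{U}_N(t).
\end{equation*}

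\textbf{Main obstacle.} By Duhamel,
\begin{equation*}
\norm{(\mathcal{U}_N(t)-\mathcal{U}(t,0))\Xi_0}^2 = 2\,\mathrm{Im}\!\int_0^t\!\! \pscal{\mathcal{U}_N(s)\Xi_0,\,\big(N^{-1/2}\mathcal{R}_3(s)+N^{-1}\mathcal{R}_4(s)\big)\,\mathcal{U}(s,0)\Xi_0}\,ds.
\end{equation*}
Using $w^2\le C(1-\Delta)$ and Cauchy--Schwarz in the kernels, one obtains operator bounds of the schematic form $|\mathcal{R}_3|\le C\,(\cN+1)\,\dGamma(1-\Delta)^{1/2}$ and $|\mathcal{R}_4|\le C\,\cN\,\dGamma(1-\Delta)$. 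The remaining difficulty, which is the core of the argument, is to propagate uniformly in $N$ and $s\in[0,t]$ the a priori bound
\begin{equation*}
\pscal{\mathcal{U}_N(s)\Xi_0,\,\big(\cN+\dGamma(1-\Delta)\big)\,\mathcal{U}_N(s)\Xi_0}\le C(s),
\end{equation*}
despite the fact that the cubic correction $\mathcal{R}_3/\sqrt N$ is not controlled by $\cN$ alone. Following~\cite{RodSch-09}, I would differentiate this quadratic form in $s$ and compute the commutators of $\cN$ and of $\dGamma(1-\Delta)$ with $\tilde\bH(s)+N^{-1/2}\mathcal{R}_3(s)$; using again $w^2\le C(1-\Delta)$, each resulting expression is bounded by the left-hand side itself plus a controlled error, and Gr\"onwall's inequality closes the estimate. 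Feeding this back into the Duhamel identity produces an $O(N^{-1/2})$ bound and hence the strong convergence in $\cF$. The weak convergence in $Q(\dGamma(1-\Delta))$ then follows from the same uniform energy bound by weak compactness.
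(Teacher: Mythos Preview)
Your proposal is correct and follows the classical coherent-state fluctuation method of Hepp, Ginibre--Velo and Rodnianski--Schlein. The paper does not actually prove this theorem; it simply states that it ``can be proved using the argument of the proof of Theorem~\ref{thm:fluctuations}'' and omits the details.

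There is, however, a mild technical difference between your route and the paper's template (the proof of Theorem~\ref{thm:fluctuations_U_N}). You propose to close the Duhamel estimate by propagating the full quadratic form $\langle \mathcal{U}_N(s)\Xi_0,(\cN+\dGamma(1-\Delta))\mathcal{U}_N(s)\Xi_0\rangle$ uniformly in $N$ along the $N$-body fluctuation dynamics, and then appealing to schematic bounds $|\mathcal{R}_3|\lesssim(\cN+1)\dGamma(1-\Delta)^{1/2}$, $|\mathcal{R}_4|\lesssim\cN\,\dGamma(1-\Delta)$. The paper instead propagates only $\cN$ along the $N$-body side (its Lemma~\ref{le:N+-bound}), reserves the $\dGamma(1-\Delta)$ bound for the \emph{limiting} Bogoliubov flow (its Theorem~\ref{thm:Bogoliubov-dynamics}), and then handles the cross term $\langle\Phi_N,R_N\Phi\rangle$ by a truncation at particle number $M$: one projects $\Phi$ onto $\{\cN\le M\}$, uses the quantitative remainder bound of Lemma~\ref{le:remainder-bound} on the truncated piece (gaining $\sqrt{M/N}$), and lets $N\to\infty$ before $M\to\infty$. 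This truncation trick avoids propagating kinetic energy through the full generator $\tilde\bH+N^{-1/2}\mathcal{R}_3+N^{-1}\mathcal{R}_4$, which is the most delicate step in your outline (the commutator $[\dGamma(1-\Delta),\mathcal{R}_3]$ involves derivatives of $w$ and needs care under the mere hypothesis $w^2\le C(1-\Delta)$). Your approach follows~\cite{RodSch-09,CheLeeSch-11} more directly and yields an explicit $O(N^{-1/2})$ rate; the paper's variant is slightly softer and, transposed to the coherent setting, only needs the easier commutator $i[\mathcal{R}_3,\cN]$.
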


Theorem \ref{thm:fluctuations_coherent} can be proved using the argument of the proof of Theorem \ref{thm:fluctuations} and we will omit the details. Although the result in the case of coherent states looks very similar to the Hartree case, the effective Bogoliubov Hamiltonians $\tilde\bH(t)$ and $\bH(t)$ are \emph{not} the same. The part involving $h(t)$ is identical, but the other terms involve the functions $\tilde K_j(t,x,y)$'s instead of the projected ones $K_j(t,x,y)$'s. 

In fact, the coherent state approach can also be used to study fluctuations around a Hartree state, if the initial data is exactly a Hartree state. This can be seen as follows. Theorem~\ref{thm:fluctuations_coherent} implies that
$$\lim_{N\to\ii}\norm{e^{it\bH_N}W(\sqrt{N}u_0)\Omega-W(\sqrt{N}u(t))\tilde{\mathbb{U}}(t,0)\Omega}_{\cF}=0,$$
where $\tilde{\mathbb{U}}(t,0)$ is the unitary propagator associated with the time-dependent Bogoliubov Hamiltonian $\tilde\bH(t)$ in Fock space. A more precise analysis shows that 
\[ \norm{e^{it\bH_N}W(\sqrt{N}u_0)\Omega-W(\sqrt{N}u(t))\tilde{\mathbb{U}}(t,0)\Omega}_{\cF} \lesssim N^{-1/2} \]
for any fixed time $t$. Hence projecting onto the $N$-particle subspace, we find
\begin{equation}
\norm{\frac{(N/e)^{N/2}}{\sqrt{N!}}e^{itH_N}u_0^{\otimes N}-\1_{\gH^N}W(\sqrt{N}u(t))\tilde{\mathbb{U}}(t,0)\Omega}_{\gH^N} \lesssim N^{-1/2} \, .
\label{eq:projection_coherent}
\end{equation}
Multiplying with $\sqrt{N!} / (N/e)^{N/2} \simeq N^{1/4}$, we get  
$$ \lim_{N \to \infty} \norm{e^{itH_N}u_0^{\otimes N}-(2\pi N)^{1/4}\1_{\gH^N}W(\sqrt{N}u(t))\tilde{\mathbb{U}}(t,0)\Omega}_{\gH^N}= 0.$$
The effective dynamics $(2\pi N)^{1/4}\1_{\gH^N}W(\sqrt{N}u(t))\tilde{\mathbb{U}}(t,0)\Omega$ agrees with our fluctuations in~\eqref{eq:fluctuations} to the leading order, even if it looks much more complicated. In fact, this analysis can be extended to initial $N$-particle wave functions 
having the form $d_N \, \1_{\gH^N} W(\sqrt{N} u_0) \Psi$, assuming the normalization constant $d_N$ to be small enough (of order $N^{1/4}$) and $\Psi$ to have sufficiently small number of particles (of order one, independent of $N$). In other words, to get information about the evolution of an $N$-particle initial data using the coherent state method requires further assumptions on the initial datum, which guarantee sufficient closeness to an Hartree state (in an appropriate sense). On the other hand, our direct method in $\gH^N$ applies to any initial datum of the general form~\eqref{eq:form_initial_datum} satisfying~\eqref{eq:condition_initial_datum}, for which no convergence rate is known.

\section{The Bogoliubov dynamics}\label{sec:Bogoliubov}

\subsection{Describing fluctuations around a given $u(t)$} \label{subsec:UN}
Consider a time-dependent normalized vector $u(t)$ in $\gH$ (which does not necessarily satisfy the Hartree equation). As in~\cite[Sec. 2.3]{LewNamSerSol-13}, we can write any function $\Psi\in \gH^N$ as follows
\begin{equation}
\Psi=\psi_0\, u(t)^{\otimes N}+ \psi_1 \otimes_s u(t)^{\otimes (N-1)} +\psi_2 \otimes_s u(t)^{\otimes (N-2)}+\cdots + \psi_N
\label{eq:decomp_Psi} 
\end{equation}
where $\psi_n\in \gH_+(t)^{n}:= \bigotimes_{\rm sym}^n \gH_+(t)$ with $\gH_+(t):= \{u(t)\}^{\bot}$. Following~\cite{LewNamSerSol-13}, we define the unitary
\begin{equation}
 \label{eq:def-unitary-UN}
\begin{array}{cccl}
U_{N}(t): & \gH^N & \to & \displaystyle\cF_+(t)^{\leq N}=\bigoplus_{n=0}^N \gH_+(t)^n \\[0.3cm]
 & \Psi & \mapsto & \psi_0\oplus \psi_1 \oplus\cdots \oplus \psi_N.
\end{array}
\end{equation}
Note that we have the inclusions
\begin{equation}
\cF_+(t)^{\leq N}\subset \cF_+(t) = \bigoplus_{n=0}^\infty \gH_+(t)^n \subset \cF.
\label{eq:inclusions} 
\end{equation}
Therefore, we can always see $U_{N}(t)$ as a partial isometry from $\gH^N$ to $\cF$, where $U_{N}(t)^*$ is extended by $0$ outside of $\cF_+(t)^{\leq N}$. 

Using $a(u(t)) \psi_n=0$, $Q(t) a^\dagger(u(t))=0$, $[a(u(t)),a^\dagger(u(t))]=1$ and
$$\psi_n \otimes_s u(t)^{\otimes (N-n)} = \frac{(a^\dagger(u(t)))^{N-n}}{\sqrt{(N-n)!}} \psi_n$$
we find that  
$$
Q(t)^{\otimes j} \frac{a (u(t))^{N-j}}{\sqrt{(N-j)!}} \left( \psi_n \otimes_s u(t)^{\otimes (N-n)} \right)=\begin{cases}
\psi_n & \text{if $j=n$,}\\[0.1cm]
0 & \text{if $j \ne n$.}\\
\end{cases}
$$
Therefore, the operators $U_{N}(t)$ and $U_{N}(t)^*$ can be equivalently written as
\bq \label{eq:alt-def-UN}
\begin{split}
U_{N}(t)\Psi &= \bigoplus_{j=0}^N Q(t)^{\otimes j} \frac{a (u(t))^{N-j}}{\sqrt{(N-j)!}}\Psi ,\\ U_{N}(t)^* \left( \bigoplus_{j=0}^N \psi_j \right) &= \sum_{j = 0}^N\frac{{{{a^\dagger (u(t))}^{N - j}}}}{{\sqrt {(N - j)!} }} \psi_j
\end{split}
\eq
for all $\Psi\in \gH^N$ and $\psi_j\in \gH_+(t)^j$, where $Q(t)=1-|u(t)\rangle\langle u(t)|$, and $a(u(t))$ and $a^\dagger(u(t))$ are the usual annihilation and creation operators on $\F$. Moreover, as shown in ~\cite[Proposition 14]{LewNamSerSol-13}, we have the following identities on $\F_+(t)^{\le N}$:
\bq 
\label{eq:UN-actions-1}
U_{N}(t) \, a^\dagger (u(t)) a(u(t)) \,U_{N}(t)^* &=& N- \N_+(t) , \hfill \\
\label{eq:UN-actions-2}
U_{N}(t)\, a^\dagger(f) a(u(t)) \,U_{N}(t)^* &=& a^\dagger(f) \sqrt{N-\N_+(t)},\hfill\\
\label{eq:UN-actions-3}
U_{N}(t) \,a^\dagger (u(t)) a(f) \,U_{N}(t)^* &=& \sqrt{N-\N_+(t)}\, a(f),\hfill\\
\label{eq:UN-actions-4}
U_{N}(t)\, a^\dagger(f) a(g) \,U_{N}(t)^* &=& a^\dagger(f) a(g),
\eq
for all $f,g\in \gH_+(t)$, where $\N_+(t)=\N-a^\dagger(u(t)) a(u(t))$ is the number operator on $\F_+(t)$. For the reader's convenience, let us quickly verify these useful formulas. First, when $f,g\in \gH_+(t)$, since $a^\dagger(f)a(g)$ commutes with $a^\dagger(u(t))$, from \eqref{eq:alt-def-UN} we deduce that 
\begin{align*}
U_N(t) a^\dagger(f) a(g) &U_N(t)^* \left( \bigoplus_{j=0}^N \psi_j \right) \\ &= U_N(t) a^\dagger(f) a(g) \left( \sum_{j=0}^N \frac{a^\dagger(u(t))^{(N-j)}}{\sqrt{(N-j)!}} \psi_j\right) \\
& =  U_N(t)  \left( \sum_{j=0}^N \frac{a^\dagger(u(t))^{(N-j)}}{\sqrt{(N-j)!}} \Big(a^\dagger(f) a(g) \psi_j \Big)\right) \\
&= \bigoplus_{j=0}^N \Big(a^\dagger(f) a(g) \psi_j \Big) = a^\dagger(f) a(g) \left( \bigoplus_{j=0}^N \psi_j \right) 
\end{align*}
for all $\psi_j\in \gH_+(t)^j$. Thus \eqref{eq:UN-actions-4} holds true. A consequence of \eqref{eq:UN-actions-4} is  
$$U_N(t) \cN_+(t) U_N(t)^*= \cN_+(t)$$
which is equivalent to \eqref{eq:UN-actions-1}. Finally, when $f\in \gH_+(t)$, using the fact that $a(f)$ commutes with $a^\dagger(u(t))$ and \eqref{eq:alt-def-UN}, we have 
\begin{align*}
U_N(t) a^\dagger(u(t)) & a(f) U_N(t)^* \left( \bigoplus_{j=0}^N \psi_j \right) \\
&= U_N(t) a^\dagger(u(t)) a(f) \left( \sum_{j=0}^N \frac{a^\dagger(u(t))^{(N-j)}}{\sqrt{(N-j)!}} \psi_j\right) \\
& =  U_N(t)  \left( \sum_{j=0}^N \frac{a^\dagger(u(t))^{(N-j+1)}}{\sqrt{(N-j+1)!}} \sqrt{N-j+1} \Big( a(f) \psi_j \Big)\right) \\
&= \bigoplus_{j=0}^N  \sqrt{N-j+1} \Big( a(f) \psi_j \Big) = \sqrt{N-\cN_+(t)} a(f)  \left( \bigoplus_{j=0}^N \psi_j \right)
\end{align*}
for all $\psi_j\in \gH_+(t)^j$, which verifies \eqref{eq:UN-actions-3}. The identity \eqref{eq:UN-actions-2} can be proved by the same way.

In the analysis of the dynamics of $U_{N}(t) \Psi_{N}(t)$, not only $U_{N}(t)$ but also its time-derivative plays an important role. We have
\begin{lemma} Let $u(t)$ be an arbitrary (sufficiently regular) trajectory on $\gH$ satisfying $\|u(t)\|=\|u(0)\|$ for all $t\ge 0$. Then the time-derivative of $U_N(t)$ is
\begin{multline} \label{eq:derivative-UN}
i\dot{U}_{N}(t)=\Big( a^\dagger(u(t))a(Q(t)i \dot u(t))-\sqrt{N-\N_+(t)}\,a(Q(t)i\dot u(t))\\
-a^\dagger(Q(t)i\dot u(t))\sqrt{N-\N_+(t)}-\pscal{i\dot u(t),u(t)}(N-\N_+(t))\Big)U_{N}(t).
\end{multline}
\end{lemma}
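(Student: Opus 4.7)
The plan is to differentiate the explicit block-diagonal formula $U_N(t)=\bigoplus_{j=0}^N F_j(t)$ with $F_j(t):=Q(t)^{\otimes j}\,a(u(t))^{N-j}/\sqrt{(N-j)!}$ from \eqref{eq:alt-def-UN} directly, block by block, and then reassemble the $(\dot F_j\Psi)_j$ into the second-quantized form on the right-hand side of \eqref{eq:derivative-UN}. Throughout I will use that $\|u(t)\|$ is conserved, so $\langle u,\dot u\rangle+\langle\dot u,u\rangle=0$ and $\dot Q(t)=-|Q\dot u\rangle\langle u|-|u\rangle\langle Q\dot u|$, together with the antilinearity of $a$.

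The first contribution is from differentiating $a(u(t))^{N-j}$: since $[a(u),a(u)]=0$, this equals $(N-j)\,a(\dot u)\,a(u)^{N-j-1}$. Splitting $\dot u=Q\dot u+\langle u,\dot u\rangle\,u$ and commuting $Q^{\otimes j}a(Q\dot u)=a(Q\dot u)\,Q^{\otimes(j+1)}$ on $\gH^{j+1}$ (valid because $Q\dot u\in\gH_+(t)$) produces the two pieces $\sqrt{N-j}\,a(Q\dot u)\,F_{j+1}(t)\Psi$ and $(N-j)\langle\dot u,u\rangle\,F_j(t)\Psi$. The more delicate contribution comes from $[\dot{Q^{\otimes j}}]\Psi_j=\sum_{k=1}^j\bigl(Q^{\otimes(k-1)}\otimes\dot Q\otimes Q^{\otimes(j-k)}\bigr)\Psi_j$, where $\Psi_j:=a(u)^{N-j}\Psi/\sqrt{(N-j)!}\in\gH^j$ is symmetric. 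Because of the symmetry of $\Psi_j$, the partial contraction in slot $k$ is independent of $k$: using $a(u)\Psi_j=\sqrt{N-j+1}\,\Psi_{j-1}$ one gets $\langle u|_k\Psi_j=\sqrt{(N-j+1)/j}\,\Psi_{j-1}$, and $[a(u),a(Q\dot u)]=0$ gives $\langle Q\dot u|_k\Psi_j=j^{-1/2}a(Q\dot u)\Psi_j$. Applying $Q^{\otimes(j-1)}$ on the other slots converts these to $F_{j-1}(t)\Psi$ and $a(Q\dot u)\,F_j(t)\Psi$ respectively (using the same commutation identity once more). Summing over $k$ and repackaging via $\sum_{k=1}^j f(x_k)\phi(\widehat{x_k})=\sqrt{j}\,a^\dagger(f)\phi$ for $\phi\in\gH^{j-1}$ yields $[\dot{Q^{\otimes j}}]\Psi_j=-\sqrt{N-j+1}\,a^\dagger(Q\dot u)\,F_{j-1}(t)\Psi-a^\dagger(u)\,a(Q\dot u)\,F_j(t)\Psi$.

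Adding the four contributions, multiplying by $i$, and applying the antilinearity identities $a(Qi\dot u)=-ia(Q\dot u)$, $a^\dagger(Qi\dot u)=ia^\dagger(Q\dot u)$, $\langle i\dot u,u\rangle=-i\langle\dot u,u\rangle$, I then identify the four terms with the $j$-th Fock component of the right-hand side of \eqref{eq:derivative-UN}; the factors $\sqrt{N-\N_+(t)}$ appear automatically once one notes that $a(Q\dot u)F_{j+1}\Psi$ and $a^\dagger(Q\dot u)F_{j-1}\Psi$ both lie in $\gH_+^j(t)$, on which $\N_+(t)$ evaluates to $j$. The main obstacle is the combinatorial bookkeeping in the derivative of $Q^{\otimes j}$; the whole simplification rests on exploiting the symmetry of $\Psi_j$ to convert slotwise expressions into clean creation/annihilation operators, and on the commutation of the $Q$-projection with annihilators of $\gH_+(t)$-vectors.
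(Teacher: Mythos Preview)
Your proposal is correct and follows essentially the same route as the paper: both differentiate the block formula $U_N(t)=\bigoplus_j Q(t)^{\otimes j}a(u(t))^{N-j}/\sqrt{(N-j)!}$, split the result into the contribution from $a(u)^{N-j}$ and from $Q^{\otimes j}$, and reassemble using $\dot Q=-|u\rangle\langle Q\dot u|-|Q\dot u\rangle\langle u|$. The only cosmetic difference is in handling $\dot{Q^{\otimes j}}$: the paper invokes the operator identity $\sum_{k}|f\rangle\langle g|_k=a^\dagger(f)a(g)$ on $\gH^j$ directly, whereas you apply the slotwise terms to the symmetric vector $\Psi_j$ and then repackage via $\sum_k f(x_k)\phi(\widehat{x_k})=\sqrt{j}\,a^\dagger(f)\phi$; these are two phrasings of the same computation.
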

Note that the last three terms in the right side of (\ref{eq:derivative-UN}) 
do not go outside of $\cF_+(t)$, but the first does. 

\begin{proof} We start by writing
\begin{equation}
U_{N}(t)= \bigoplus_{k=0}^N \; Q(t)^{\otimes k}\frac{a(u(t))^{N-k}}{\sqrt{(N-k)!}}=:\bigoplus_{k=0}^N \cL_k(t).
\label{eq:formula-psi-k-a}
\end{equation}
Now taking the time-derivative, we get
\begin{align} \label{eq:derivative-Lk}
\dot\cL_k(t) &=\frac{N-k}{\sqrt{(N-k)!}}Q(t)^{\otimes k} a(\dot{u}(t))a(u(t))^{N-k-1} \nn\\
&~+ \frac{1}{\sqrt{(N-k)!}}\;\Big(\dot{Q}(t)\otimes Q(t)\otimes\cdots\otimes Q(t)+\cdots\nn\\
& \qquad\qquad\qquad\qquad\quad + Q(t)\otimes\cdots\otimes Q(t)\otimes \dot{Q}(t)\Big)a(u(t))^{N-k}.
\end{align}
In the first term of the right side of (\ref{eq:derivative-Lk}), we have used that $a(\dot{u}(t))$ and $a(u(t))$ commute. It is clear that for any $v$, we have
$$Q(t)^{\otimes (n+1)} a^\dagger(v)_{|\gH^{n}}={a^\dagger(Q(t)v)Q(t)^{\otimes n}}_{|\gH^{n}}$$
and
$$Q(t)^{\otimes n} a(v)_{|\gH^{n+1}}={a(Q(t)v)Q(t)^{\otimes (n+1)}}_{|\gH^{n+1}}+\pscal{v,u(t)}Q(t)^{\otimes n} a(u(t))_{|\gH^{n+1}}.$$
So the first term of the right side of (\ref{eq:derivative-Lk}) can be rewritten as 
\begin{align} \label{eq:derivative-Lk-a}
&\frac{N-k}{\sqrt{(N-k)!}}Q(t)^{\otimes k}\Big( a(\dot{u}(t))a(u(t))^{N-k-1}\Psi\Big)\nn\\
&=\frac{\sqrt{N-k}}{\sqrt{(N-k-1)!}} a(Q(t)\dot{u}(t)) Q(t)^{\otimes (k+1)}a(u(t))^{N-k-1}\nn\\
&\qquad+\pscal{\dot{u}(t),u(t)}\frac{N-k}{\sqrt{(N-k)!}}Q(t)^{\otimes k} a(u(t))^{N-k}\nn\\
&=\sqrt{N-k}\, a(Q(t)\dot{u}(t)) \cL_{k+1}(t)+\pscal{\dot{u}(t),u(t)}(N-k)\cL_k(t).
\end{align}
Note that 
\begin{align*}
\dot{Q}(t)=-|u(t)\rangle\langle \dot{u}(t)|-|\dot{u}(t)\rangle\langle {u}(t)|=-|u(t)\rangle\langle \dot{u}(t)|Q(t)-Q(t)|\dot u(t)\rangle\langle {u}(t)|
\end{align*}
because 
\begin{align*}
|u(t)\rangle\langle \dot{u}(t)|&=|u(t)\rangle\langle \dot{u}(t)|Q(t)+\pscal{\dot{u}(t),u(t)}P(t)\\
|\dot u(t)\rangle\langle {u}(t)|&=Q(t)|\dot u(t)\rangle\langle {u}(t)|+\pscal{u(t),\dot u(t)}P(t),
\end{align*}
where $P(t)=|u(t)\rangle \langle u(t)|$, and 
$$\pscal{\dot{u}(t),u(t)}+\pscal{u(t),\dot u(t)}=\frac{d}{dt} ||u(t)||^2=0.$$
Since $a^\dagger(f)a(g)=\dGamma(|f\rangle\langle g|)$, we find
\begin{align*}
&\dot{Q}(t)\otimes Q(t)\otimes\cdots\otimes Q(t)+\cdots+ Q(t)\otimes\cdots\otimes Q(t)\otimes \dot{Q}(t)\\
&\qquad=-\left(\sum_{j=1}^k|u(t)\rangle\langle Q(t)\dot{u}(t)|_j\right)Q(t)^{\otimes k}-Q(t)^{\otimes k}\left(\sum_{j=1}^k|Q(t)\dot u(t)\rangle\langle {u}(t)|_j\right)\\
&\qquad=-a^\dagger(u(t))a(Q(t)\dot u(t))Q(t)^{\otimes k}-Q(t)^{\otimes k}a^\dagger(Q(t)\dot u(t))a(u(t))\\
&\qquad=-a^\dagger(u(t))a(Q(t)\dot u(t))Q(t)^{\otimes k}-a^\dagger(Q(t)\dot u(t))Q(t)^{\otimes (k-1)}a(u(t)).
\end{align*}
So the second term of the right side of (\ref{eq:derivative-Lk}) can be rewritten as 
\begin{flalign} \label{eq:derivative-Lk-b}
&\frac{1}{\sqrt{(N-k)!}}\;\Big(\dot{Q}(t)\otimes Q(t)\otimes\cdots\otimes Q(t)+\cdots \nn\\
&\qquad\qquad\qquad\qquad\qquad+ Q(t)\otimes\cdots\otimes Q(t)\otimes \dot{Q}(t)\Big)a(u(t))^{N-k}\nn\\
&=-a^\dagger(u(t))a(Q(t)\dot u(t))\cL_k-\sqrt{N-k+1}\,a^\dagger(Q(t)\dot u (t))\cL_{k-1}.
\end{flalign}

Substituting (\ref{eq:derivative-Lk-a}) and (\ref{eq:derivative-Lk-b}) into (\ref{eq:derivative-Lk}) and taking the sum over $k$, we deduce from (\ref{eq:formula-psi-k-a}) the formula
\begin{multline*}
\dot{U}_N(t)=\Big(-a^\dagger(u(t))a(Q(t)\dot u(t))+\sqrt{N-\N_+(t)}\,a(Q(t)\dot u(t))\\
-a^\dagger(Q(t)\dot u(t))\sqrt{N-\N_+(t)}+\pscal{\dot u(t),u(t)}(N-\N_+(t))\Big)U_N(t),
\end{multline*}
which is equivalent to (\ref{eq:derivative-UN}). 
\end{proof}

\subsection{Derivation of the effective evolution} \label{subsec:Bogoliubov-dynamics} Let $\Psi_{N}(t)$ satisfy the Schr\"odinger equation (\ref{eq:Schrodinger-dynamics}). Let $u(t)$ be the solution to the nonlinear Hartree equation and let $U_{N}(t)$ be as in (\ref{eq:def-unitary-UN}). Let us consider the vector
$$\boxed{\Phi_{N}(t):= U_{N}(t) \Psi_{N}(t)}$$
in the Fock space $\F_+^{\le N}(t) \subset \F_+(t)\subset \F$. From (\ref{eq:Schrodinger-dynamics}) and (\ref{eq:derivative-UN}), we get
\begin{multline} \label{eq:derivative-Phi-Nt}
i\dot{\Phi}_{N}(t)=\Big( U_{N}(t) H_N U_{N}(t)^* + a^\dagger(u(t))a(Q(t)i\dot u(t))-\pscal{i\dot u(t),u(t)}(N-\N_+(t))\\ -\sqrt{N-\N_+(t)}\,a(Q(t)i\dot u(t))-a^\dagger(Q(t)i\dot u(t))\sqrt{N-\N_+(t)}  \Big) \Phi_{N}(t).
\end{multline} 
A lengthy but straightforward computation as in \cite[Section 4]{LewNamSerSol-13} shows that 
\begin{flalign} \label{eq:UNHNUN-original}
& U_{N}(t)H_{N} U_{N}(t)^* = N e(t) + \dGamma \Big( Q(t) \big(h(t)+K_1(t)-e(t)\big) Q(t) \Big)\nn\\
&\quad+ \sqrt{N-\N_+(t)} a \big( Q(t)h(t) u(t)\big)+a ^*\big( Q(t) h(t)u(t)\big)\sqrt{N-\N_+(t)} \nn\\
&\quad+\frac12\iint_{\R^d\times\R^d}\bigg( K_2(t,x,y)a^\dagger(x)a^\dagger(y)+\overline{K_2(t,x,y)}a(x)a(y)\bigg)\d x\,\d y \nn\\
&\quad + R_{1,N}(t) + R_{2,N}(t)
\end{flalign} 
where 
\begin{align} \label{eq:RN1}
R&_{1,N} (t)= \frac{1}{2} \dGamma( Q(t) [w*|u(t)|^2+K_{1}(t)-\mu(t)]Q(t))\frac{1-\N_+(t)}{N-1} \nn\\
&- \frac{\N_+(t) \sqrt{N-\N_+(t)}}{N-1} a \big(Q(t) [w*|u(t)|^2]u(t)\big) \nn\\
&+  \frac{1}{2} \iint  K_2(t,x,y)a^\dagger(x) a^\dagger(y) \d x\, \d y\, \bigg( \frac{\sqrt{(N-\N_+(t))(N-\N_+(t)-1)}}{N-1} -1 \bigg)\nn \\
& + \frac{\sqrt{N-\N_+(t)}}{N-1} \iiiint  \, \big( \1 \otimes Q(t) \, w \, 
Q(t) \otimes Q(t) \big) (x,y; x', y') \nn\\ & \quad\quad\quad\quad\quad\quad\quad\quad\quad\quad\quad\quad\quad\quad \times u (t,x) \, a^\dagger(y) a(x')a(y')\,\d x\, \d y\, \d x'\, \d y' \nn \\
&+\text{h.c.}
\end{align}
(we have written $X+{\rm h.c.}$ instead of $X+X^*$ for short) and
\begin{align} \label{eq:RN2} 
R_{2,N} (t) =  \frac{1}{2(N-1)} \iiiint & \big( Q(t)\otimes Q(t) \, w \, Q(t) \otimes Q(t) \big) (x,y; x',y') \nn \\ & \times a^\dagger(x) a^\dagger(y) a(x')a(y')\,\d x\, \d y\, \d x'\, \d y' .
\end{align}
Here $w$ is always understood as the multiplication operator by the function $w(x-y)$ in $\gH^2$ and recall that 
\begin{align*}
P(t) & =1-Q(t)=|u(t) \rangle \langle u(t)|,\\
h(t) &=-\Delta+|u(t)|^2*w-\mu(t)\\
e(t) &=\langle u(t), h(t) u(t) \rangle = \langle u(t), (-\Delta + |u(t)|^2*w/2) u(t) \rangle.
\end{align*}
The idea behind (\ref{eq:UNHNUN-original}) is that, in accordance with (\ref{eq:UN-actions-1})-(\ref{eq:UN-actions-4}), the unitary operator $U_N (t)$ implements a c-number substitution; it maps $a (u(t))$ and $a^* (u(t))$ to $\sqrt{N- \cN_+ (t)}$ (which in first approximation can be replaced by the number $\sqrt{N}$), leaving instead creation and annihilation operators with arguments orthogonal to $u(t)$ invariant. Writing $H_N$ in second quantized form (similarly to (\ref{eq:bH_N}), although here only the action of $H_N$ on the $N$-particle sector matters) and decomposing the arguments of creation and annihilation operators in components proportional and orthogonal to $u(t)$, we end up with (\ref{eq:UNHNUN-original}); the details can be found in Appendix \ref{app:UN-HN-UN*}.

Now we substitute (\ref{eq:UNHNUN-original}) into (\ref{eq:derivative-Phi-Nt}) and use the Hartree equation $i\dot u(t) =h(t) u(t)$. Since 
$$
 a^\dagger(u(t))a(Q(t) h(t) u(t))+d\Gamma( Q(t) h(t) Q(t))= \dGamma( h(t) ) - \dGamma( h(t) P(t)) ,
$$
and $\Phi_N(t)\in \F_+(t)$, we find that 
\bq \label{eq:equation_Phi_N_t}
\boxed{i\dot{\Phi}_{N}(t) =  \Big(  \bH(t) + R_{1,N}(t)+R_{2,N}(t) \Big)\Phi_{N}(t)}
\eq
where $\bH(t)$ is the Bogoliubov Hamiltonian given in (\ref{eq:Bogoliubov-Hamiltonian}):
\begin{align*} 
\bH(t)= d\Gamma(h(t)+K_1(t))+ \frac12\iint\Big(  K_2(t,x,y)&a^\dagger(x)a^\dagger(y)\\
&+\overline{K_2(t,x,y)}a(x)a(y)\Big)\, dx\,dy.
\end{align*}

Heuristically, if most particles live in the state $u(t)$, then $\N_+(t)/N$ is negligible and hence  $R_{1,N}(t)+R_{2,N}(t)$ can be ignored. The main goal of our paper is to show that ${\Phi}_{N}(t)$ actually converges to the exact solution of the Bogoliubov equation   
\bq \label{eq:formal-Bogoliubov-equation}
\boxed{i\dot{\Phi}(t) = \bH(t) \Phi(t).}
\eq

The convergence $\Phi_N(t)\to \Phi(t)$ will be reformulated and justified in Section \ref{sec:reformulation}. In the rest of this section, we shall study the well-posedness of equation (\ref{eq:formal-Bogoliubov-equation}). 

\subsection{Well-posedness of the Bogoliubov dynamics}\label{sec:Dynamics-quadratic-form} Now we state and prove a result concerning the well-posedness of the Bogoliubov dynamics (\ref{eq:formal-Bogoliubov-equation}). The main difficulty is of course the fact that $\bH(t)$ is time-dependent.

\begin{theorem}[Bogoliubov dynamics] \label{thm:Bogoliubov-dynamics} Assume that $w^2 \le C(1-\Delta)$. Let $u_0\in H^1(\R^d)$ with $\|u_0\|=1$ and let $u(t)$ be the unique solution to the Hartree equation~\eqref{eq:Hartree-equation}. For every state $\Phi_0$ in the quadratic form domain $Q(d\Gamma(1-\Delta))$, there exists a unique solution to the Bogoliubov equation
\bq \label{eq:Bog-eq}
\left\{ \begin{gathered}
   i \dot\Phi(t)=\bH(t) \Phi(t),\hfill \\
   \Phi(0)=\Phi_0\hfill \\ 
\end{gathered}  \right.
\eq
such that $\Phi\in C^0([0,\ii),\cF)\cap L^\ii_{\rm loc}(0,\ii;Q(d\Gamma(1-\Delta)))$.
Moreover, $\Phi(t)\in \F_+(t)$ and 
$$\langle \Phi(t), d\Gamma(1-\Delta) \Phi(t) \rangle \le C e^{Ct} \langle \Phi_0, d\Gamma(1-\Delta) \Phi_0 \rangle .$$ 
\end{theorem}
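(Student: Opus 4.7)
My strategy is a Galerkin approximation combined with an a priori form-energy estimate in the spirit of hyperbolic equations, plus a short commutator computation to obtain the invariance of $\F_+(t)$. First I set up the form bounds: using $w^2\le C(1-\Delta)$ together with the propagated regularity $u(t)\in C^0([0,\infty),H^1)$ (with norm bounded on compact time intervals), one checks $\|K_1(t)\|_{\mathrm{op}}\le C$, $\|K_2(t)\|_{L^2(\R^{2d})}\le C$ and $h(t)\le C(1-\Delta)$. Standard creation/annihilation inequalities then give the uniform form bound $|\langle\Phi,\bH(t)\Phi\rangle|\le C\langle\Phi,(d\Gamma(1-\Delta)+1)\Phi\rangle$, so $\bH(t)$ defines a continuous family of symmetric operators from $Q(d\Gamma(1-\Delta))$ to its dual. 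I then introduce a Galerkin cutoff: fix an orthonormal basis of $\gH$ compatible with $H^1$, let $\mathbb{P}_n$ project onto the finite-dimensional subspace of $\F$ spanned by Fock states built from the first $n$ modes with at most $n$ particles, and set $\bH_n(t)=\mathbb{P}_n\bH(t)\mathbb{P}_n$. This is a bounded, strongly continuous, self-adjoint family, so the finite-dimensional ODE $i\dot\Phi_n=\bH_n(t)\Phi_n$ with $\Phi_n(0)=\mathbb{P}_n\Phi_0$ admits a unique norm-preserving solution.

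Next comes the key a priori estimate. Differentiating along the approximate flow,
\[
\frac{d}{dt}\langle\Phi_n(t),(d\Gamma(1-\Delta)+1)\Phi_n(t)\rangle=i\langle\Phi_n(t),[\bH_n(t),d\Gamma(1-\Delta)]\Phi_n(t)\rangle.
\]
The commutator splits into (i) the diagonal piece $d\Gamma([h(t)+K_1(t),1-\Delta])$, handled by the pointwise bound $|\nabla(|u(t)|^2\ast w)|\le C(1-\Delta)^{1/2}$ coming from $w^2\le C(1-\Delta)$ and $u(t)\in H^1$, and (ii) the commutator with $\iint K_2 a^\dagger a^\dagger + \mathrm{h.c.}$, controlled by Cauchy--Schwarz at the kernel level after writing $(1-\Delta)_x K_2(x,y)$ via an integration by parts and using $\|\nabla_x K_2(t,\cdot,\cdot)\|_{L^2}<\infty$. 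Both contributions are dominated by $C\langle\Phi_n,(d\Gamma(1-\Delta)+1)\Phi_n\rangle$, so Grönwall gives
\[
\langle\Phi_n(t),(d\Gamma(1-\Delta)+1)\Phi_n(t)\rangle \le e^{Ct}\langle\Phi_0,(d\Gamma(1-\Delta)+1)\Phi_0\rangle
\]
uniformly in $n$. Since $\dot\Phi_n$ is then bounded in $L^\infty_{\mathrm{loc}}([0,\infty),\F)$, an Ascoli--Arzel\`a argument extracts a subsequence converging in $C^0_{\mathrm{loc}}([0,\infty),\F)$ to a limit $\Phi$ which solves the Bogoliubov equation in the distributional sense, satisfies the same exponential energy bound by weak lower semicontinuity, and lies in $L^\infty_{\mathrm{loc}}(0,\infty;Q(d\Gamma(1-\Delta)))\cap C^0([0,\infty),\F)$. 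Uniqueness in this class is immediate: the difference $D$ of two solutions satisfies $i\dot D=\bH(t)D$, $D(0)=0$, and $\frac{d}{dt}\|D\|^2=2\,\mathrm{Im}\,\langle\bH(t)D,D\rangle=0$ by self-adjointness (justified by the same regularization used above), so $D\equiv 0$.

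Finally I establish $\Phi(t)\in\F_+(t)$ assuming $\Phi_0\in\F_+(0)$. Define $\xi(t):=a(u(t))\Phi(t)$, which vanishes exactly when $\Phi(t)\in\F_+(t)$. The key algebraic identity is
\[
[a(u(t)),\bH(t)]=a(h(t)u(t)),
\]
obtained from $[a(f),d\Gamma(A)]=a(A^*f)$ together with the cancellations $K_1(t)u(t)=0$ (since $K_1=QK_1Q$) and $\int\overline{u(t,x)}K_2(t,x,y)\,dx=0$ (since $K_2=(Q\otimes Q)\tilde K_2$). Combining this with the anti-linearity of $a(\cdot)$, which turns the Hartree equation $\dot u=-ih(t)u$ into $a(\dot u)=ia(h(t)u)$, a short computation yields $i\dot\xi(t)=\bH(t)\xi(t)$ with $\xi(0)=0$, so by the uniqueness of the previous step $\xi\equiv 0$. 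I expect the principal obstacle to be the commutator estimate in Step~3: the singular kernel $K_2(t,x,y)=u(t,x)w(x-y)u(t,y)$ forces one to trade powers of $w$ against powers of $(1-\Delta)^{1/2}$ very carefully, relying exclusively on the hypothesis $w^2\le C(1-\Delta)$ and on the $H^1$-regularity of the condensate; everything else is a standard application of abstract ODE and weak-compactness machinery to the linear quadratic evolution.
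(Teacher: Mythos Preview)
Your overall architecture (regularize, prove an a priori energy bound via Gr\"onwall, pass to the limit, then check $\Phi(t)\in\F_+(t)$ by the commutator identity $[a(u(t)),\bH(t)]=a(h(t)u(t))$) matches the paper's. The invariance argument for $\F_+(t)$ is essentially identical to what the paper does. The real discrepancy, and the real gap, is in the energy estimate.

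You try to close Gr\"onwall on $\langle\Phi_n,(d\Gamma(1-\Delta)+1)\Phi_n\rangle$ by bounding the commutator $i[\bH(t),d\Gamma(1-\Delta)]$. Under the sole hypotheses $w^2\le C(1-\Delta)$ and $u(t)\in H^1$, this commutator is \emph{not} obviously controlled. For the diagonal part one has, as a form,
\[
\langle f,\,i[|u|^2\!\ast w,-\Delta]\,f\rangle=-2\,\mathrm{Im}\,\langle(\nabla(|u|^2\!\ast w))f,\nabla f\rangle,
\]
which needs $\nabla(|u|^2\!\ast w)\in L^\infty$; the assumption $w^2\le C(1-\Delta)$ (e.g.\ Coulomb in $d=3$) does not give this for $u\in H^1$. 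For the pairing part, $[d\Gamma(1-\Delta),\iint K_2\,a^\dagger a^\dagger]$ produces the kernel $((1-\Delta)_x+(1-\Delta)_y)K_2$, and already $\nabla_x K_2$ contains $u(x)(\nabla_x w)(x-y)u(y)$, again uncontrolled without extra assumptions on $\nabla w$. Your own claim $\|\nabla_x K_2\|_{L^2}<\infty$ fails here. So the step you flagged as the ``principal obstacle'' is a genuine hole, not merely a technicality.

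The paper sidesteps exactly this difficulty. Its abstract well-posedness result (their Theorem on dynamics generated by quadratic forms) is designed so that the role of the commutator $i[\bH(t),A]$ is played by two softer ingredients: (i) the \emph{time derivative} $\dot\bH(t)$, bounded by $C\,d\Gamma(1-\Delta)$ using the estimate $\|\dot K_j(t)\|_{H^{-1}(\R^{2d})}\le C$ (proved by the integration-by-parts trick $\nabla_x(w(x-y))=-\nabla_y(w(x-y))$, which avoids differentiating $w$ twice); and (ii) the commutator $i[\bH(t),\cN]$ with the \emph{number operator}, which is trivial because only the pairing term contributes and $\|K_2\|_{L^2}\le C$. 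The approximation is then the Yosida-type $P_n=(1+A/n)^{-1}$, which commutes with both $A=d\Gamma(1-\Delta)$ and $B=\cN+1$, rather than a Galerkin projection onto finite Fock sectors (which on $\R^d$ cannot be built from eigenfunctions of $-\Delta$ anyway). In short: replace your commutator with $d\Gamma(1-\Delta)$ by the pair $(\dot\bH(t),\,i[\bH(t),\cN])$, and the argument goes through under the stated hypotheses.
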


The proof of Theorem \ref{thm:Bogoliubov-dynamics} is based on the following abstract result, which is inspired by works of Kisy\'nski \cite{Kis-64} and Simon~\cite{Simon-71}.  

\begin{theorem}[Dynamics generated by quadratic forms]\label{thm:dynamics-unbounded} Let $A\ge 1$ be a self-adjoint operator and $\{H(t)\}_{t\in [0,1)}$ a family of symmetric quadratic forms on a Hilbert space $\cH$. Assume that 
\begin{itemize}
\item[(a)] There exists a positive operator $B$ which commutes with $A$ and 
$$ CA \ge H(t) \ge C^{-1}A-CB \quad{and}\quad i[H(t),B] \le CA.$$

\item[(b)] The time-derivative of $H(t)$ exists and it is bounded in $Q(A)$: 
$$-C \langle x, A x\rangle \le \frac{d}{dt} \langle x, H(t) x \rangle \le C\langle x, A x\rangle\quad{for~all~}x\in Q(A).$$
\end{itemize}
Then for every $x_0\in Q(A)$, there exists a unique $x(t)\in L^\ii(0,1;Q(A))$ such that $t\mapsto x(t)$ is continuous in $\cH$ and 
$$
\left\{ \begin{gathered}
   i\dot{x}(t)= H(t)x(t)\quad {in}~~Q(A)^*,\hfill \\
   x(0)=x_0.\hfill \\ 
\end{gathered}  \right.
$$
Moreover, for all $t\in [0,1)$ we have $\|x(t)\|=\|x_0\|$ and 
$$\langle x(t), A x(t) \rangle \le Ce^{Ct} (\langle x_0, A x_0 \rangle +C).$$
\end{theorem}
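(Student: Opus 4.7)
The plan is the classical regularize-estimate-pass-to-the-limit scheme for time-dependent quadratic-form generators. I would set $J_n := (1+A/n)^{-1}$, which is a bounded self-adjoint contraction commuting with $A$ and, by hypothesis, with $B$, and define the regularized generator $H_n(t) := J_n H(t) J_n$. By the upper form bound in (a), $H_n(t)$ is a bounded self-adjoint operator on $\mathcal{H}$, and by (b) it is strongly $C^1$ in time. Standard ODE theory in Hilbert space therefore provides a unique classical solution $x_n \in C^1([0,1),\mathcal{H})$ of $i\dot{x}_n(t) = H_n(t)\, x_n(t)$ with $x_n(0)=x_0$, and $\|x_n(t)\|=\|x_0\|$ is automatic from $H_n(t)=H_n(t)^*$.

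The heart of the proof is an estimate for $\langle x_n(t), A x_n(t)\rangle$ uniform in $n$. I would introduce the Lyapunov functional
\[
F_n(t) := \langle x_n(t), H_n(t) x_n(t)\rangle + \kappa\, \langle x_n(t), B\, x_n(t)\rangle,
\]
with $\kappa>C^2$ large. Sandwiching the lower form bound in (a) between two $J_n$'s gives the coercivity $F_n(t) \ge c\,\langle J_n x_n(t), A J_n x_n(t)\rangle \ge 0$. Differentiating along the flow,
\[
\dot F_n(t) = \langle x_n(t), \dot H_n(t)\, x_n(t)\rangle + i\kappa\,\langle x_n(t), [H_n(t), B]\, x_n(t)\rangle,
\]
and (b) together with $i[H(t),B] \le CA$ from (a), both conjugated by $J_n$ (using $[J_n,B]=0$), deliver $\dot F_n(t) \le C'\langle J_n x_n(t), A J_n x_n(t)\rangle \le C''\, F_n(t)$. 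Gr\"onwall then yields $F_n(t) \le e^{C'' t} F_n(0)$, and using $H(0)\le CA$ and a rough estimate of the initial $B$-contribution bounds $F_n(0)$ by $C(\langle x_0, A x_0\rangle + C)$, producing the announced $Q(A)$-bound uniformly in $n$.

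Passage to the limit proceeds by compactness. The uniform bound on $x_n$ in $L^\infty(0,1;Q(A))$, combined with $\|\dot x_n\|_{Q(A)^*} \le C\,\|x_n\|_{Q(A)}$ from the equation and the form boundedness $H(t)\le CA$, gives a weak-$\ast$ subsequential limit $x \in L^\infty(0,1;Q(A))$; a direct Cauchy-in-$\mathcal{H}$ argument applied to $x_n-x_m$, using symmetry of $H_n(t)$ together with $\|(H_n(t)-H_m(t))x_m\|_{Q(A)^*}\to 0$, upgrades this to strong convergence in $C^0([0,1);\mathcal{H})$. Testing the integrated equation
\[
\langle \varphi, x_n(t)\rangle = \langle \varphi, x_0\rangle - i\int_0^t \langle H_n(s)\varphi, x_n(s)\rangle\, ds
\]
against $\varphi\in Q(A)$, and exploiting $J_n\varphi \to \varphi$ in $Q(A)$, identifies $x$ as a solution of $i\dot x=H(t)x$ in $Q(A)^*$. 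The $A$-bound transfers to $x(t)$ by lower semicontinuity, $\|x(t)\|=\|x_0\|$ by strong convergence, and uniqueness follows from the usual energy estimate on the difference of two solutions, regularized by $J_m$ and then passed to the limit using symmetry of $H(t)$.

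The main obstacle is the uniform $Q(A)$-bound itself: hypothesis (a) provides a commutator estimate only for $B$ and not for $A$, so one cannot directly control $\frac{d}{dt}\langle x_n, A x_n\rangle$. The device that unlocks the estimate is the Lyapunov functional $F_n$, which couples the energy $\langle x_n, H_n x_n\rangle$ (whose time-derivative is controlled by (b)) and the $B$-moment (whose time-derivative is controlled by $i[H(t),B]\le CA$), with coercivity coming from the lower bound $H(t)\ge C^{-1}A - CB$. The assumption $[A,B]=0$ is precisely what allows the regularization $J_n$ to pass transparently through all three ingredients.
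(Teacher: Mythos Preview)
Your approach coincides with the paper's: the same resolvent regularization $J_n=(1+A/n)^{-1}$ and, crucially, the same Lyapunov device (the paper writes it as $A_n(t)=J_n(H(t)+B)J_n$, which is your $F_n$ with the $B$-term also regularized; you should do this too, since otherwise $\langle x_n,B\,x_n\rangle$ need not be finite a priori). The Gr\"onwall estimate and the weak-$\ast$ compactness step are likewise identical in spirit.

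The one place where the paper proceeds differently is uniqueness. Your sketch ``regularize by $J_m$ and use symmetry'' is shaky: $J_m$ does not commute with $H(t)$, so $\frac{d}{dt}\|J_m z\|^2=2\,\mathrm{Im}\,\langle J_m^2 z,H(t)z\rangle$ has no reason to vanish, and passing to the limit is exactly the difficulty one is trying to avoid. The paper instead differentiates the \emph{bounded} quadratic form $\langle x(t),A(t)^{-1}x(t)\rangle$ with $A(t):=H(t)+B\ge A$, observing that both $A(t)^{-1}\dot A(t)A(t)^{-1}$ and $-i[A(t)^{-1},H(t)]=-A(t)^{-1}i[H(t),B]A(t)^{-1}$ are controlled by $CA(t)^{-1}$ thanks to the hypotheses; Gr\"onwall on this bounded quantity then gives uniqueness directly, with no regularization needed. (Alternatively, once $x\in L^\infty(0,1;Q(A))$ and $\dot x\in L^\infty(0,1;Q(A)^*)$ are established, the Lions lemma gives $\frac{d}{dt}\|z\|^2=2\,\mathrm{Im}\,\langle z,H(t)z\rangle=0$ without regularization---but that is not the route the paper takes.)
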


In Theorem \ref{thm:dynamics-unbounded}, the condition (b) is similar to the assumption on the boundedness of the commutator $i[H(t),A]$, which is necessary to control the energy of the dynamics generated by a time-dependent Hamiltonian. A proof of Theorem \ref{thm:dynamics-unbounded} can be found in Appendix \ref{app:dynamics}. 

In order to apply Theorem \ref{thm:dynamics-unbounded}, we need the following bounds on the Bogoliubov Hamiltonian.

\begin{lemma}[Bogoliubov Hamiltonian] \label{le:Bog-Hamiltonian} Assume that $w^2 \le C(1-\Delta)$. Let $u_0\in H^1(\R^d)$ with $\|u_0\|=1$ and let $u(t)$ be the unique solution to the Hartree equation~\eqref{eq:Hartree-equation}. Then the Bogoliubov Hamiltonian $\bH(t)$ defined in (\ref{eq:Bogoliubov-Hamiltonian}) satisfies 
\begin{align} \label{eq:a-Bog-1}
C d\Gamma(1-\Delta) \ge \bH(t) \ge \dGamma (-\Delta-C).
\end{align}
on Fock space $\F$. Moreover, \[  
\pm \dot{\bH}(t) \le C d\Gamma(1-\Delta) \quad \text{and }  \pm i[\bH (t),\N ]\le C (\N+1). \]
\end{lemma}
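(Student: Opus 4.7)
The plan is to estimate each term of $\bH(t)$ separately, using $w^2\le C(1-\Delta)$ together with a uniform $H^1$-bound on the Hartree solution $u(t)$. For the latter, conservation of $\|u(t)\|_{L^2}$ is immediate; conservation of the Hartree energy, combined with the pointwise estimate
$\||u|^2\ast w\|_{L^\infty}\le C\|u\|_{L^2}\|(1-\Delta)^{1/2}u\|_{L^2}$
(which follows from $w(x-\cdot)^2\le C(1-\Delta)$ applied to~$u$), controls $\|\nabla u(t)\|_{L^2}$ and yields $\|u(t)\|_{H^1}\le C$ uniformly in $t\ge 0$. Consequently $|\mu(t)|\le C$, one has the one-body bound $-\Delta-C\le h(t)\le C(1-\Delta)$ on $\gH$, and
\[ \|K_1(t)\|_{\mathrm{op}}\le\|K_1(t)\|_{\mathrm{HS}}\le C,\qquad \|K_2(t)\|_{L^2(\R^{2d})}^2 = \iint|u(x)|^2|w(x-y)|^2|u(y)|^2\, dx\,dy\le C, \]
the latter again using $w(x-\cdot)^2\le C(1-\Delta)$.

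Next I would write $\bH(t)=d\Gamma(h(t)+K_1(t))+(B(t)+B(t)^*)$ with $B(t)=\tfrac12\iint K_2(t,x,y)a^\dagger(x)a^\dagger(y)\,dx\,dy$. Second-quantizing the one-body estimates of the first paragraph gives $d\Gamma(-\Delta)-C\N\le d\Gamma(h(t)+K_1(t))\le C\,d\Gamma(1-\Delta)$. The pairing part is controlled by the standard Bogoliubov bound
\[ \pm(B(t)+B(t)^*) \le C\,\|K_2(t)\|_{L^2(\R^{2d})}(\N+1), \]
obtained by expanding a test vector into particle-number sectors and applying Cauchy--Schwarz to the matrix elements $|\langle\psi_n,B(t)\psi_{n-2}\rangle|\le \sqrt{n(n-1)}\|K_2(t)\|_{L^2}\|\psi_n\|\|\psi_{n-2}\|$. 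Combining, and using $\N\le d\Gamma(1-\Delta)$ together with the fact that both sides of each desired inequality vanish on the vacuum sector so the ``$+1$'' is harmless, yields the two claimed bounds $d\Gamma(-\Delta-C)\le \bH(t)\le C\,d\Gamma(1-\Delta)$.

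For $\dot\bH(t)$, differentiating $h, K_1, K_2$ in time produces expressions in which one factor of $u$ is replaced by $\dot u=-ih(t)u$; using the Hartree equation and the uniform $H^1$-bound on $u(t)$, the resulting coefficients $\partial_t(|u|^2\ast w)$, $\dot\mu$, $\partial_t\tilde K_j$ obey the same kind of bounds as their undifferentiated counterparts (for instance $|\partial_t(|u|^2\ast w)(x)|\le C\|u\|_{H^1}\|\dot u\|_{H^{-1}}\le C\|u\|_{H^1}^3$), and repeating the argument above gives $\pm\dot\bH(t)\le C\,d\Gamma(1-\Delta)$. For the commutator, $d\Gamma$ of any operator commutes with $\N$, so $[\bH(t),\N]=[B(t)+B(t)^*,\N]=-2(B(t)-B(t)^*)$, because $B(t)$ (resp.\ $B(t)^*$) raises (resp.\ lowers) the particle number by~$2$. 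Hence $\pm i[\bH(t),\N]=\pm 2i(B(t)^*-B(t))$ is again bounded by $C(\N+1)$ via the same Bogoliubov estimate.

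The main difficulty is establishing and using the uniform-in-time $H^1$ control on $u(t)$, since every kernel estimate above depends on it; this is standard but requires careful combination of both conservation laws with the form bound on $w$. Once that is in hand, the remainder of the proof is routine second-quantized estimation, and the only bookkeeping subtlety is the ``$+1$'' in $C(\N+1)$, absorbed via $\N\le d\Gamma(1-\Delta)$ on non-vacuum sectors and the vanishing of each inequality on the vacuum.
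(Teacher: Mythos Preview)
Your treatment of the first bound~\eqref{eq:a-Bog-1} and of the commutator $i[\bH(t),\N]$ is essentially the same as the paper's and is fine. The gap is in the estimate for $\dot{\bH}(t)$.

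The assertion that the time-differentiated coefficients ``obey the same kind of bounds as their undifferentiated counterparts'' is not justified and is in fact false at the claimed level of regularity. Your example estimate
\[
|\partial_t(|u|^2\ast w)(x)|\le C\|u\|_{H^1}\|\dot u\|_{H^{-1}}
\]
would require $w(x-\cdot)\,\bar u\in H^1(\R^d)$ uniformly in $x$, and differentiating this product produces a term $w(x-\cdot)\nabla u$ whose $L^2$-norm is only controlled by $\|u\|_{H^2}$, which you do not have (only $u(t)\in H^1$ is available). For the Coulomb potential $w(x)=|x|^{-1}$ in $\R^3$ one can check directly that $\partial_t(|u|^2\ast w)$ need not be bounded. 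More importantly, the pairing kernel $\dot{\tilde K}_2$ contains the term $(-\Delta u)(t,x)\,w(x-y)\,u(t,y)$, which is \emph{not} in $L^2(\R^d\times\R^d)$ for $u\in H^1$, so the ``standard Bogoliubov bound'' you invoke cannot be repeated verbatim.

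The paper repairs this in two steps. First (Lemma~\ref{le:kernel}) it shows only the weaker fact $\dot K_j(t,\cdot,\cdot)\in H^{-1}(\R^d\times\R^d)$, using an integration by parts together with the crucial identity $\nabla_x\big(w(x-y)\big)=-\nabla_y\big(w(x-y)\big)$ to move the derivative off $w$ and onto factors that are controlled by $\|u\|_{H^1}$ and $\|v\|_{H^1}$. Second, the sector-by-sector pairing estimate is redone with the $H^{-1}$ norm of the kernel, which produces an extra $\langle\psi,d\Gamma(1-\Delta)\psi\rangle$ on the right-hand side; this is precisely why the conclusion for $\dot{\bH}(t)$ is stated with $d\Gamma(1-\Delta)$ rather than $\N+1$. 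Without this $H^{-1}$ kernel estimate and the modified pairing bound, your argument for $\pm\dot{\bH}(t)\le C\,d\Gamma(1-\Delta)$ does not go through.
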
 

Assuming Lemma \ref{le:Bog-Hamiltonian} for the moment, we can give the

\begin{proof}[Proof of Theorem \ref{thm:Bogoliubov-dynamics}] Thanks to Lemma \ref{le:Bog-Hamiltonian}, we can apply Theorem \ref{thm:dynamics-unbounded} with $H(t)=\bH(t)$, $A=C\dGamma(1-\Delta)$ and $B=\cN+1$. Thus we obtain all desired results on $\Phi(t)$ except the fact that $\Phi(t)\in \F_+(t)$. To prove that $\Phi(t)\in \F_+(t)$, we compute  
\bqq
 \frac{d}{dt} \| a(u(t))\Phi(t)\|^2 &=& 2 \Re \left \langle  a(u(t))\Phi(t), \frac{d}{dt} \big(a(u(t)) \Phi(t) \big) \right\rangle \hfill\\
&=& 2 {\rm Im} \left \langle  a(u(t))\Phi(t), \big( -a(i\dot u(t)) + a(u(t)) \bH(t) \big) \Phi(t) \right\rangle.
\eqq
By using the Hartree equation $i\dot{u}_t=h(t) u(t)$ and the commutator relation
\bqq
[a(u(t)),\bH(t)] = [a(u(t)),\dGamma(h(t)) ]= a(h(t) u(t))  
\eqq
we get
\bqq
\frac{d}{dt} \| a(u(t))\Phi(t)\|^2 &=& 2{\rm Im} \left \langle  a(u(t))\Phi(t), \bH(t) a(u(t)) \Phi(t) \right\rangle =0.
\eqq
Since $a(u_0)\Phi_0=0$, we conclude that $a(u(t))\Phi(t)=0$, and hence $\Phi(t)\in \F_+(t)$ for all $t\ge 0$. 
\end{proof}

To prove Lemma \ref{le:Bog-Hamiltonian}, we need the following technical lemma. 

\begin{lemma}[Kernel estimates] \label{le:kernel} For $j=1,2$, the kernels $K_j(t,x,y)$ are bounded uniformly in time:
$$\|K_j(t,\cdot,\cdot)\|_{L^2(\R^d \times \R^d)}\le C, \quad \|\dot{K}_j(t,\cdot,\cdot)\|_{H^{-1}(\R^d \times \R^d)} \le C.$$
\end{lemma}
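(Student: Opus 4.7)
The plan is to reduce everything to estimates on the unprojected kernels $\tilde K_j(t,x,y)$, since the action of $Q(t)$ is a contraction and its time derivative $\dot Q(t)=-|\dot u(t)\rangle\langle u(t)|-|u(t)\rangle\langle\dot u(t)|$ has a simple rank-two structure. Before anything else, one needs a time-uniform bound $\|u(t)\|_{H^1(\R^d)}\le C$; this follows from conservation of $\|u(t)\|_{L^2}^2$ and of the Hartree energy $\langle u(t),(-\Delta+|u(t)|^2\ast w/2)u(t)\rangle$ under the flow, combined with the form bound $|w|\le \varepsilon(1-\Delta)+C_\varepsilon$ implied by $w^2\le C(1-\Delta)$.

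For the $L^2$ estimate on $K_j$, orthogonality of $Q(t)$ (resp.\ $Q(t)\otimes Q(t)$) reduces the task to bounding $\|\tilde K_j(t,\cdot,\cdot)\|_{L^2(\R^{2d})}$. Writing
\[
\|\tilde K_j\|_{L^2}^2 = \iint |u(t,x)|^2\,|w(x-y)|^2\,|u(t,y)|^2\, dx\, dy,
\]
and applying the operator inequality $w^2\le C(1-\Delta)$ on $L^2(\R^d)$ to $y\mapsto u(t,y)$ (shifted by $x$, which is harmless by translation invariance), the inner integral is bounded by $C\|u(t)\|_{H^1}^2$ uniformly in $x$, yielding $\|\tilde K_j\|_{L^2}^2\le C\|u(t)\|_{L^2}^2\|u(t)\|_{H^1}^2\le C$.

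For the $H^{-1}$ estimate on $\dot K_j$, I would use Leibniz to split $\dot K_j$ into the main piece coming from $\dot{\tilde K}_j$ plus remainders containing $\dot Q(t)$. The $\dot Q$-pieces produce kernels that are sums of separable terms of the type $\dot u(x)F(y)$ (or conjugates/symmetric analogues) with $F$ built from $\tilde K_j$, hence $L^2(\R^d)$-controlled by the previous step; combined with $\|\dot u(t)\|_{H^{-1}(\R^d)}\le C$ from the Hartree equation $i\dot u=h(t)u$ and $u(t)\in H^1$, each such kernel has bounded $H^{-1}(\R^{2d})$ norm. The main piece $\dot{\tilde K}_j$ itself splits into terms of the shape $\dot u(x)w(x-y)v(y)$ with $v(t)=u(t)$ or $\overline{u(t)}$. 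To estimate such a term in $H^{-1}(\R^{2d})$, I would dualize against $\phi\in H^1(\R^{2d})$:
\[
\left\langle \dot u(x)w(x-y)v(y),\phi\right\rangle=\left\langle \dot u,\overline{\Phi}\right\rangle_{H^{-1}_x,H^1_x},\qquad \Phi(x):=\int w(x-y)v(y)\overline{\phi(x,y)}\,dy,
\]
and reduce to proving $\|\Phi\|_{H^1_x}\le C\|v\|_{H^1}\|\phi\|_{H^1(\R^{2d})}$. The $L^2$ bound on $\Phi$ is immediate from Cauchy-Schwarz and the $L^2$-kernel estimate. For $\nabla_x\Phi$, the piece where the derivative falls on $\phi$ is again controlled directly by Cauchy-Schwarz.

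The main obstacle is the piece $\int \nabla_x w(x-y)\,v(y)\,\overline{\phi(x,y)}\,dy$: since $u\in H^1$ but not $H^2$, we cannot afford a second $x$-derivative on $u$, and the naive Cauchy-Schwarz fails. The key trick is the identity $\nabla_x w(x-y)=-\nabla_y w(x-y)$ followed by integration by parts in $y$, which replaces this term by the sum of $\int w(x-y)\nabla v(y)\overline{\phi(x,y)}\,dy$ and $\int w(x-y)v(y)\nabla_y\overline{\phi(x,y)}\,dy$. Both are then controlled by Cauchy-Schwarz combined with the uniform $L^2$-kernel estimate, the fact that $v\in H^1$, and the fact that $\nabla_y\phi\in L^2(\R^{2d})$ is part of the full $H^1(\R^{2d})$-norm of the test function. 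It is crucial here that $\phi$ carries derivatives in \emph{both} variables; the sharper-looking estimate in $H^{-1}_x L^2_y$ would fail at exactly this step.
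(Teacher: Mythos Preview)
Your approach is essentially the paper's: both hinge on the identity $\nabla_x w(x-y)=-\nabla_y w(x-y)$ followed by integration by parts, which is exactly how the paper (following \cite{CheLeeSch-11}) bounds the worst piece $(\Delta_x u)(x)\,w(x-y)\,u(y)$ against an $H^1(\R^{2d})$ test function. The only organizational difference is that the paper expands $\dot u$ explicitly via the Hartree equation and treats each resulting term directly, whereas you keep $\dot u\in H^{-1}$ abstract and instead establish the mapping property $\phi\mapsto\Phi\in H^1_x$.

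One imprecision worth flagging: not every $\dot Q$-remainder is literally of the form $\dot u(x)F(y)$ with $F\in L^2$. For instance the piece $|u\rangle\langle\dot u|\,\tilde K_1\,Q$ contributes a separable kernel $u(x)G(y)$ in which $G$ arises from pairing $\dot u\in H^{-1}$ against $\tilde K_1$ applied to an $H^1$ function; bounding this in $H^{-1}(\R^{2d})$ amounts to showing $\tilde K_1:H^1\to H^1$, which is not ``$L^2$-controlled by the previous step'' but does follow from the very same integration-by-parts trick you already introduced for the main term. So no new idea is missing, only a slight reordering of your argument.
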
 

Recall that $K_1(t)={Q(t)} \tilde K_1(t) Q(t)$ with $\tilde{K}_1(t,x,y)=u(t,x)w(x-y)\overline{u(t,y)}$ and $K_2(t)=(Q(t) \otimes Q(t)) \tilde K_2(t)$ with $\tilde K_2(t,x,y)=u(t,x)w(x-y)u(t,y)$. 

\begin{proof} From the assumption $w^2 \le C(1-\Delta)$ and from the uniform bound $\|u(t)\|_{H^1(\R^d)} \le C$, it follows that $\| w^2*|u(t)|^2\|_{L^\infty} \le C$. Thus 
\begin{align*} \int_{\R^d\times \R^d} |u(t,x)|^2 w(x-y)^2 |u(t,y)|^2 dx\,dy = \langle u(t), (w^2*|u(t)|^2)u(t) \rangle \le C 
\end{align*}
for $j=1,2$. The desired bounds on $K_j$'s follow immediately. 

Next, the time-derivatives $\dot{K}_j(t,x,y)$'s can be decomposed into many terms using the Hartree equation $i\dot{u}(t)=(-\Delta +|u(t)|^2*w-\mu(t))u(t)$. Let us consider the term $(\Delta_x u)(t,x)w(x-y) u(t,y)$ in detail.  Let $v\in H^1(\R^d \times \R^d)$. As in \cite[Lemma 6.2]{CheLeeSch-11}, by using an integration by part and the fact that $\nabla_x (w(x-y))=-\nabla_y (w(x-y))$, we get
\begin{align*} & \iint_{\R^d\times \R^d} (-\Delta_x u)(t,x)u(t,y) w(x-y) v(x,y) \, \d x \, \d y \nn\\
=&\iint_{\R^d\times \R^d} (\nabla_x u)(t,x)u(t,y) w(x-y) [\nabla_x v(x,y)+ \nabla_y v(x,y)]\, \d x \, \d y \nn\\
&+ \iint_{\R^d\times \R^d} (\nabla_x u)(t,x) (\nabla_y u)(t,y) w(x-y) v(x,y) \, \d x \, \d y.  \nn
\end{align*}
From the estimates $\| w^2*|u(t)|^2 \|_{L^\infty} \le C$, $\| w(x-y)v(x,y)\|_{L^2(\R^d \times \R^d)} \le C\|v\|_{H^1(\R^d \times \R^d)}$ and the Cauchy-Schwarz inequality we can conclude that
\begin{align*} \Big| \iint (-\Delta_x u)(t,x)u(t,y) w(x-y) v(x,y) \, \d x \, \d y \Big| \le  C \| v \|_{H^1}.
\end{align*}
Thus $(-\Delta_x u)(t,x)u(t,y) w(x-y)$ is bounded in $H^{-1}(\R^d \times \R^d)$. Since the other terms can be treated similarly, we obtain the desired bounds on $\dot{K}_j(t,x,y)$'s.
\end{proof}

Now we are able to give the

\begin{proof}[Proof of Lemma \ref{le:Bog-Hamiltonian}] First, we prove that,
\bq \label{eq:a-Bog-1b}
C d\Gamma(1-\Delta) \ge \bH(t) \ge \dGamma (1-\Delta)-C (\N +1)
.\eq
The term $d\Gamma(h(t))$ can be treated easily using the uniform bound $\| |u(t)|^2*w\|_{L^\infty} \le C$ and $\|K_1(t)\|_{L^2\to L^2}\le C$. Now let us consider the pairing term
$$\frac12\iint_{\R^d\times\R^d}\bigg(K_2(t,x,y) a^\dagger(x)a^\dagger(y) + \overline{K_2(t,x,y)} a(x)a(y) \bigg) \,\d x \, \d y .$$
For every state $\psi=(\psi_n)_{n\ge 0}$ in Fock space $\F=\bigoplus_{n\ge 0} \gH^n $ we can estimate  
\begin{align} \label{eq:paring-K2-start}
& \bigg| \Big\langle \psi , \iint_{\R^d\times\R^d} \d x \d y \, \overline{{K_2}(t,x,y)} a(x)a(y) \, \psi \Big\rangle \bigg| \nn\\
& = \bigg| \sum_{n\ge 0} \sqrt{(n+1)(n+2)} \iint_{\R^d\times\R^d}  \,\d x \,\d y \int_{\R^{dn}} \d{\bf z}  \overline{{K_2}(t,x,y)} \overline{\psi_{n}({\bf z})}\psi_{n+2}(x,y,{\bf z}) \bigg| \nn\\
&\le \sum_{n\ge 0} \sqrt{(n+1)(n+2)} \|K_2(t,.,.) \|_{L^2(\R^d \times \R^d)} \| \psi_n \| \| \psi_{n+2} \| \nn\\
& \le \frac{1}{2}\|K_2(t,.,.) \|_{L^2} \Big( \sum_{n\ge 0} (n+1)\| \psi_n \|^2 +  \sum_{n\ge 0} (n+2) \| \psi_{n+2} \|^2 \Big) \nn\\
&\le  \|K_2(t,.,.) \|_{L^2} \big\langle \psi, (\N+2) \psi \big \rangle.
\end{align}
Here we have introduced the variable ${\bf z}\in \R^{dn}$ and used the Cauchy-Schwarz inequality. From the kernel bound $\|K_2(t,.,.) \|_{L^2} \le C$ in Lemma \ref{le:kernel}, we can control the pairing term as
\begin{align} \label{eq:paring}
\pm \frac12\iint_{\R^d\times\R^d}\Big(K_2(t,x,y) a^\dagger(x)a^\dagger(y) +\text{h.c.}\Big) \, \d x \, \d y \leq C (\N+1).
\end{align}
Thus (\ref{eq:a-Bog-1}) follows immediately.

The time-derivative $\dot{\bH}(t)$ can be estimated in the same way, using the kernel bound $\|\dot{K}_j(t,.,.) \|_{H^{-1}} \le C$ instead of $\|K_j(t,.,.) \|_{L^2} \le C$. More precisely,  similarly to (\ref{eq:paring-K2-start})   we have now
\begin{align*}
& \bigg| \Big\langle \psi , \iint_{\R^d\times\R^d} \d x \, \d y \, \overline{{\dot{K_2}}(t,x,y)} a(x)a(y) \, \psi \Big\rangle \bigg| \nn\\
&\hspace{3cm}  \le \|\dot{K}_2(t,.,.) \|_{H^{-1}} \Big( \big\langle \psi, (\N+2) \psi \big \rangle + \big\langle \psi, d\Gamma(1-\Delta) \psi \big \rangle \Big)
\end{align*}
for every state $\psi\in Q(d\Gamma(1-\Delta))$. By similar bounds, we conclude that 
\bqq
\pm \dot{\bH}(t) \le C d\Gamma(1-\Delta).
\eqq

To prove the bound for the commutator $[\bH (t), \N]$ we observe that 
\begin{align} \label{eq:comm-pairing}i [\bH(t),\N ] &= \frac{i}{2}\iint_{\R^d\times\R^d}\bigg(K_2(t,x,y)[a^\dagger(x)a^\dagger(y),\N ] \nn\\
&\qquad\qquad\qquad\qquad+\overline{K_2(t,x,y)}[a(x)a(y),\N ]\bigg)\d x\,\d y \\
&= - i \iint_{\R^d\times\R^d} K_2(t,x,y) a^\dagger(x)a^\dagger(y) \d x \, \d y + \text{h.c.} \nn \end{align}
Similarly to (\ref{eq:paring-K2-start}), we find, for an arbitrary $\psi \in \F$,   
\[ \begin{split} \Big| \iint_{\R^d \times \R^d} K_2 (t,x,y) \,  \langle \psi,  \,a^\dagger(x) &a^\dagger(y)  \psi \rangle \, \d x \, \d y \Big| \\ & \leq\; C \| K_2 (t,.,.) \|_{L^2} \| (\N+1)^{1/2} \psi \|^2 .\end{split} \]
Since  $\| K_2 (t,.,.) \|_{L^2} \leq C$ by Lemma \ref{le:kernel}, we can conclude that $\pm [i\bH (t),\N] \leq C (\N+1)$.
\end{proof}

\section{Convergence of the fluctuations}\label{sec:reformulation}

\subsection{Statement involving $U_N(t)$} 

Now we reformulate the convergence in Theorem \ref{thm:fluctuations}. Let $w^2\le C(1-\Delta)$. Let $\Psi_{N}(t)=e^{itH_N}\Psi_{N,0}$ be the Schr\"odinger evolution with initial wave function $\Psi_{N,0}\in \bigotimes_{\rm sym}^N H^1(\R^d)$. Let $u(t)$ be the unique Hartree solution with initial datum $u_0\in H^1(\R^d)$. Let $\Phi(t)$ satisfy the Bogoliubov equation (\ref{eq:Bog-eq}) with initial datum $\Phi_0 \in Q(\dGamma(1-\Delta))\subset \F$ as in Theorem \ref{thm:Bogoliubov-dynamics} . Let $U_{N}(t)$ be defined from $u(t)$ as in (\ref{eq:def-unitary-UN}). Our main result in this section is 

\begin{theorem} [Convergence to Bogoliubov dynamics] \label{thm:fluctuations_U_N}  If 
\bq \label{eq:cv-initial-datum}
U_{N,0} \Psi_{N,0} \to \Phi_0
\eq
strongly in $\F$ and weakly in $Q(\dGamma(1-\Delta))$ as $N\to \infty$, then 
$$\boxed{\lim_{N\to \infty} U_{N}(t) \Psi_{N}(t)= \Phi(t)}$$
strongly in $\F$ and weakly in $Q(\dGamma(1-\Delta))$, for every fixed $t\geq0$.
\end{theorem}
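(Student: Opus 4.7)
The plan is to compare $\Phi_N(t) := U_N(t)\Psi_N(t)$, which satisfies equation \eqref{eq:equation_Phi_N_t}, with the Bogoliubov solution $\Phi(t)$ of $i\dot\Phi = \bH(t)\Phi$ inside the fixed Fock space $\F$, and to show $\|\Phi_N(t)-\Phi(t)\|_\F\to 0$. The key is that the two equations share the same generator $\bH(t)$, so the difference is driven only by the remainders $R_{1,N}(t)+R_{2,N}(t)$, which are formally of order $N^{-1/2}$.

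\medskip

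\textbf{Step 1 (a priori bounds).} First I would establish the uniform-in-$N$ estimate
\[
\sup_{0\le s\le t}\pscal{\Phi_N(s),\,(\N+\dGamma(1-\Delta))\Phi_N(s)} \le C(t).
\]
The $\N$-bound comes from the commutator estimate $\pm i[\bH(t),\N]\le C(\N+1)$ (Lemma \ref{le:Bog-Hamiltonian}) together with an analogous control of the commutators $[R_{j,N}(t),\N]$, which carry an explicit $1/N$ or $1/\sqrt{N}$, followed by Gronwall on $\pscal{\Phi_N,\N\Phi_N}$. The $\dGamma(1-\Delta)$-bound is inherited from conservation of the many-body energy $\pscal{\Psi_N(t),H_N\Psi_N(t)}$: the hypothesis $w^2\le C(1-\Delta)$ gives $H_N\ge (1-\epsilon)\dGamma(-\Delta)-CN$ on $\gH^N$, so $\pscal{\Psi_N(t),\dGamma(1-\Delta)\Psi_N(t)}\le CN$ uniformly in $N$. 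Translating back via the decomposition \eqref{eq:UNHNUN-original} of $U_N(t)H_N U_N(t)^*$ together with the identities listed after \eqref{eq:alt-def-UN}, and absorbing the off-diagonal $\sqrt{N-\N_+}\,a(Q(t)h(t)u(t))$ cross-terms by Cauchy--Schwarz and Lemma \ref{le:kernel}, yields the claimed uniform bound for $\Phi_N$. The weak convergence $\Phi_{N,0}\to\Phi_0$ in $Q(\dGamma(1-\Delta))$ assumed in \eqref{eq:cv-initial-datum} provides the initial datum for the Gronwall argument.

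\medskip

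\textbf{Step 2 (differential inequality and remainder smallness).} Since $\bH(t)$ is self-adjoint on $Q(\dGamma(1-\Delta))$, a direct computation using \eqref{eq:equation_Phi_N_t} and \eqref{eq:formal-Bogoliubov-equation} shows that the $\bH(t)$-contributions cancel exactly, leaving
\[
\frac{d}{dt}\|\Phi_N(t)-\Phi(t)\|_\F^2 \;=\; 2\,\mathrm{Im}\pscal{(R_{1,N}(t)+R_{2,N}(t))\Phi_N(t),\,\Phi(t)}.
\]
I would then bound each matrix element by exploiting the explicit negative powers of $N$ in $R_{1,N}$ and $R_{2,N}$. For the quartic remainder, Cauchy--Schwarz in the four variables together with $w^2\le C(1-\Delta)$ gives
\[
|\pscal{R_{2,N}(s)\Phi_N(s),\Phi(s)}| \;\le\; \frac{C}{N-1}\,\pscal{\Phi_N,\dGamma(1-\Delta)\Phi_N}^{1/2}\,\pscal{\Phi,\dGamma(1-\Delta)\Phi}^{1/2},
\]
and the pieces of $R_{1,N}$ are estimated analogously using Lemma \ref{le:kernel}, the uniform bound $\|u(t)\|_{H^1}\le C$, and the elementary $\sqrt{N-\N_+}\le\sqrt{N}$. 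Combining the a priori bound from Step 1 with the energy bound on $\Phi$ from Theorem \ref{thm:Bogoliubov-dynamics} yields $|\pscal{(R_{1,N}+R_{2,N})(s)\Phi_N(s),\Phi(s)}|\le C(s)/\sqrt{N}$, locally uniformly in $s$.

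\medskip

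\textbf{Step 3 (conclusion).} Integrating the differential inequality and using the assumption $\|\Phi_{N,0}-\Phi_0\|_\F\to 0$, one gets $\|\Phi_N(t)-\Phi(t)\|_\F\to 0$, i.e.\ strong convergence in $\F$. The weak convergence in $Q(\dGamma(1-\Delta))$ is then automatic: Step 1 gives uniform boundedness of $(\Phi_N(t))_N$ in this Hilbert space, so any weakly convergent subsequence has a limit there, and the already established $\F$-limit forces this limit to be $\Phi(t)$.

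\medskip

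The main obstacle is Step 1, specifically propagating the $\dGamma(1-\Delta)$ bound uniformly in $N$. Unlike the bound on $\pscal{\Phi_N,\N\Phi_N}$, which follows rather directly from the smallness of the commutators with $\N$, the energy bound requires a careful use of the algebraic identity \eqref{eq:UNHNUN-original} together with the sign structure of $\bH(t)$ from Lemma \ref{le:Bog-Hamiltonian} to absorb the off-diagonal $\sqrt{N-\N_+}\,a(Q(t)h(t)u(t))$-type terms without losing a power of $N$. This is precisely where the hypothesis $u_0\in H^1$, the kernel bounds of Lemma \ref{le:kernel}, and the operator bound $w^2\le C(1-\Delta)$ come together.
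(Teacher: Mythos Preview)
Your Step~2 contains the main gap. The claimed bound for the quartic remainder,
\[
|\pscal{R_{2,N}(s)\Phi_N(s),\Phi(s)}| \;\le\; \frac{C}{N-1}\,\pscal{\Phi_N,\dGamma(1-\Delta)\Phi_N}^{1/2}\pscal{\Phi,\dGamma(1-\Delta)\Phi}^{1/2},
\]
does not hold. On the $n$-particle sector $R_{2,N}$ equals $(N-1)^{-1}\sum_{i<j}A_{ij}$, and the combinatorial factor $n(n-1)/2$ cannot be absorbed by a single copy of $\dGamma(1-\Delta)\sim n(1-\Delta_{x_1})$ on each side. Since $\Phi_N\in\F_+^{\le N}$ permits $n$ as large as $N$, the prefactor $1/(N-1)$ is cancelled and the resulting bound is only $O(1)$. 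The same issue afflicts the cubic and off-diagonal pieces of $R_{1,N}$, which scale like $\N^{3/2}/\sqrt N$ or $\N^2/N$: with only first moments of $\N$ (Lemma~\ref{le:N+-bound}) and of $\dGamma(1-\Delta)$ at hand, the matrix elements $\pscal{\Phi_N,R_{j,N}\Phi}$ need not be small. Your asserted $O(N^{-1/2})$ would in fact require propagating \emph{higher} moments of $\N$ on $\Phi_N$, hence stronger assumptions on $\Phi_0$ than those in the theorem (see Remark~\ref{rmk:rate}).

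The paper replaces your direct estimate by a truncation argument: one writes $\Phi(t)=\1_+^{\le M}\Phi(t)+\1_+^{>M}\Phi(t)$, applies the quantitative bound $\pm\1_+^{\le M}R_{1,N}\1_+^{\le M}\le C\sqrt{M/N}\,(\N+1)$ from Lemma~\ref{le:remainder-bound} on the low part, and on the high part uses only $\langle\Phi_N,\N\Phi_N\rangle\le C$ together with the tail smallness $\pscal{\Phi(t)^{>M},\dGamma(1-\Delta)\Phi(t)^{>M}}\to 0$ as $M\to\infty$, available for the \emph{limit} $\Phi$ by Theorem~\ref{thm:Bogoliubov-dynamics}; then $N\to\infty$ first, $M\to\infty$ afterwards. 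In particular the strong $\F$-convergence needs only the $\N$-bound on $\Phi_N$ (Lemma~\ref{le:N+-bound}) and the kinetic bound on $\Phi$, not on $\Phi_N$. Your Step~1 kinetic estimate for $\Phi_N$ is therefore not needed for this part, and your proposed derivation via conservation of $\pscal{\Psi_N,H_N\Psi_N}$ is in any case problematic: the cross terms $\sqrt{N-\N_+}\,a(Q(t)h(t)u(t))$ in~\eqref{eq:UNHNUN-original} carry a genuine factor $\sqrt N$ (with $Qh u$ only in $H^{-1}$), so absorbing them by Cauchy--Schwarz yields an $O(N)$ bound, not $O(1)$.
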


\begin{remark} The condition (\ref{eq:cv-initial-datum}) is clearly satisfied if we start with a Hartree state at time 0, $\Psi_{N,0}=u_0^{\otimes N}$ with $u_0\in H^1(\R^d)$. In this case we have $U_{N,0}\Psi_{N,0}=|0 \rangle = \Phi_0$ for all $N$. 
\end{remark}

In Theorem~\ref{thm:fluctuations_U_N}, $\Phi(t)$ is a state on $\F_+(t)\subset \F$ which describes the fluctuations around the Hartree state $u(t)$ through the unitary $U_N(t)$. Note that
\begin{align*}
\lim_{N\to \infty} \| U_N(t) \Psi_N(t) - \Phi(t)\|_{\cF} &=\lim_{N\to \infty} \| U_N(t) \Psi_N(t) - \1_+^{\le N}\Phi_N(t)\|_{\cF}\\
&=\lim_{N\to \infty} \|  \Psi_N(t) - U_N(t)^* \1_+^{\le N} \Phi(t)\|_{\gH^N} \\
&=\lim_{N\to \infty}\norm{\Psi_N(t)-\sum_{n=0}^N u(t)^{\otimes (N-n)}\otimes_s\phi_n(t)}_{\gH^N}
\end{align*}
where $\1_{+}^{\le N}$ is the projection onto the truncated Fock space $\F_+(t)^{\le N}$. Therefore, Theorem \ref{thm:fluctuations_U_N} implies Theorem \ref{thm:fluctuations} immediately. Theorem \ref{thm:fluctuations_U_N} will be proved in the rest of this section. 

\subsection{Bound on number of particles outside of the condensate} The goal of this subsection is to estimate the expectation of the number of particles operator, and of its powers, in the evolved many body state $\phi_N (t) = U_N (t) \Psi_{N} (t)$. These estimates will play a crucial role in the proof of Theorem \ref{thm:fluctuations_U_N}. To control the growth of the expectation of the number of particles operator, we need the following bounds on the remainder $R_{N,1} (t)$ defined in (\ref{eq:RN1}).

\begin{lemma}[Remainder Hamiltonian] \label{le:remainder-bound} Let us denote $\1_{+}^{\le M}$ the projection onto the truncated Fock space $\F_+(t)^{\le M}$. For all $1\le M\le N$, we have
\begin{align} \label{eq:bound-ENt}
\pm \1_{+}^{\le M} R_{1,N} (t) \1_{+}^{\le M} \le C \sqrt{\frac{M}{N}}  (\N +1).
\end{align}
Moreover 
\begin{equation}\label{eq:bound-coR} \pm \1_{+}^{\le N} i\left[ R_{1,N} (t),\N \right] \1_{+}^{\le N} \le C \, (\N +1). \end{equation}
\end{lemma}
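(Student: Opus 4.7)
The plan is to decompose $R_{1,N}(t)$ into the five summands displayed in (\ref{eq:RN1}) (plus h.c.) and to bound each one separately as a quadratic form on $\F_+^{\le M}$. The gain over the naive $\F$-bound comes entirely from the $\N_+$-dependent prefactors, which are small on $\F_+^{\le M}$ whenever $M\ll N$: indeed, on $\F_+^{\le M}$ one has $|(1-\N_+)/(N-1)|\le M/(N-1)$, $\N_+\sqrt{N-\N_+}/(N-1)\le C\sqrt{M}\sqrt{M/N}$, $|\sqrt{(N-\N_+)(N-\N_+-1)}/(N-1)-1|\le C\N_+/N$ (Taylor expansion), and $\sqrt{N-\N_+}/(N-1)\le C/\sqrt{N}$. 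Since $M/N\le \sqrt{M/N}$ for $M\le N$, all of these are controlled by $\sqrt{M/N}$ times bounded functions of $\N_+$.

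For the first term, the one-body operator $Q(t)[w*|u(t)|^2+K_1(t)-\mu(t)]Q(t)$ is bounded uniformly in $t$ (using $\|w*|u(t)|^2\|_{L^\infty}\le C$, $|\mu(t)|\le C$ from $\|u(t)\|_{H^1}\le C$, and $\|K_1(t)\|_{L^2\to L^2}\le\|K_1(t)\|_{\mathrm{HS}}\le C$ by Lemma \ref{le:kernel}), so $\dGamma(\cdot)\le C\N$, and multiplication by $(1-\N_+)/(N-1)$ on $\F_+^{\le M}$ produces the factor $\sqrt{M/N}$. For the second term, set $f(t):=Q(t)[w*|u(t)|^2]u(t)\in\gH$, $\|f(t)\|\le C$, and estimate
\[
\left|\left\langle\psi,a(f(t))\tfrac{\N_+\sqrt{N-\N_+}}{N-1}\psi\right\rangle\right|\le \|f(t)\|\,\|(\N+1)^{1/2}\psi\|\left\|\tfrac{\N_+\sqrt{N-\N_+}}{N-1}\psi\right\|,
\]
where the last factor is bounded by $C\sqrt{M/N}\,\|\N^{1/2}\psi\|$ on $\F_+^{\le M}$, yielding the desired bound. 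For the third term, the same Cauchy--Schwarz computation as (\ref{eq:paring-K2-start}), applied to $\iint K_2\, a^\dagger(x)a^\dagger(y)\,dx\,dy$ paired with $g(\N_+):=\sqrt{(N-\N_+)(N-\N_+-1)}/(N-1)-1$ (of operator norm $O(M/N)$ on $\F_+^{\le M}$), combined with $\|K_2(t)\|_{L^2}\le C$ from Lemma \ref{le:kernel}, gives a bound $\le C(M/N)\langle\psi,(\N+1)\psi\rangle$.

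The main obstacle, and where the proof requires the most care, is the cubic term (and its h.c.), involving one creation and two annihilations. Denote its kernel by $W(y;x',y')=\int u(t,x)\,(\mathbf{1}\otimes\overline{Q(t)}\,w\,Q(t)\otimes Q(t))(x,y;x',y')\,dx$; one checks that $\|W(t)\|_{L^2(\R^{3d})}\le C$ uniformly in $t$, by essentially the same argument as in Lemma \ref{le:kernel} (using $w^2\le C(1-\Delta)$ and $\|u(t)\|_{H^1}\le C$). A triple Cauchy--Schwarz in $(y,x',y')$ then yields, for any $\phi,\psi\in\F$,
\[
\left|\left\langle\phi,\iiiint W(y;x',y')\,a^\dagger(y)a(x')a(y')\psi\,dy\,dx'\,dy'\right\rangle\right|\le \|W\|_{L^2}\,\|\N^{1/2}\phi\|\,\|\N^{1/2}(\N-1)^{1/2}\psi\|.
\]
Applying this with $\phi=\sqrt{N-\N_+}/(N-1)\,\psi$ and noting that $\|\N^{1/2}\phi\|\le (C/\sqrt{N})\|\N^{1/2}\psi\|$ on $\F_+^{\le M}$, and $\|\N^{1/2}(\N-1)^{1/2}\psi\|\le\sqrt{M}\|\N^{1/2}\psi\|$ on the same subspace, one obtains a bound $\le C\sqrt{M/N}\,\langle\psi,(\N+1)\psi\rangle$. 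Summing the five contributions proves (\ref{eq:bound-ENt}).

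Finally, for (\ref{eq:bound-coR}), observe that $\N$ commutes with every function of $\N_+$ (on $\F_+$, where $\N_+=\N$), while $[a(g),\N]=a(g)$ and $[a^\dagger(g),\N]=-a^\dagger(g)$ for any $g$. Consequently each of the five terms of $R_{1,N}(t)$ has a commutator with $\N$ of exactly the same algebraic form, multiplied by a bounded integer constant equal to the net number of creations minus annihilations. The same estimates carried out above, now specialized to $M=N$, therefore yield $\pm i[R_{1,N}(t),\N]\le C(\N+1)$ on $\F_+^{\le N}$, completing the proof.
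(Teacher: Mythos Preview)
Your overall strategy is sound and matches the approach in \cite[Proposition~15]{LewNamSerSol-13} (which the paper cites for (\ref{eq:bound-ENt})): decompose $R_{1,N}$ and exploit the smallness of the $\N_+$-dependent prefactors on $\F_+^{\le M}$. Your treatment of the first three terms and your reduction of (\ref{eq:bound-coR}) to the same estimates at $M=N$ are correct.

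The gap is in the cubic term. Your claim that the kernel $W(y;x',y')$ lies in $L^2(\R^{3d})$ is false under the assumption $w^2\le C(1-\Delta)$ alone. Indeed, writing $Q_t=\1-P_t$ and unfolding the projections, $W$ contains a piece of the form
\[
u(t,x')\int Q_t(y,z)\,w(x'-z)\,Q_t(z,y')\,dz,
\]
whose $L^2$-norm in $(y,x',y')$ equals $\int |u(t,x')|^2\,\|Q(t)\,w(x'-\cdot)\,Q(t)\|_{\mathrm{HS}}^2\,dx'$. Since $Q(t)=\1-P(t)$ and multiplication by $w(x'-\cdot)$ is never Hilbert--Schmidt unless $w\in L^2(\R^d)$, this integral diverges for, e.g., the Coulomb potential $w=|x|^{-1}$ in $\R^3$ (which satisfies $w^2\le 4(-\Delta)$ by Hardy). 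The analogy with Lemma~\ref{le:kernel} breaks down because $K_1,K_2$ carry \emph{two} factors of $u$, one per spatial variable, whereas $W$ has only one; the missing decay in the third variable is what makes $\|W\|_{L^2}=\infty$.

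The correct estimate---used in the paper's own proof of (\ref{eq:bound-coR}) and in \cite{LewNamSerSol-13} for (\ref{eq:bound-ENt})---avoids any $L^2$-kernel bound. One rewrites the cubic term as in (\ref{eq:cub-1}) and applies Cauchy--Schwarz in $(x,y)$ so that $|w(x-y)|^2|u(t,x)|^2$ appears together; the key input is then only $\sup_y\int|w(x-y)|^2|u(t,x)|^2\,dx=\|w^2*|u(t)|^2\|_{L^\infty}\le C$, which follows from $w^2\le C(1-\Delta)$ and $u(t)\in H^1$. This yields $\big|\langle\psi,\text{cubic}\,\psi\rangle\big|\le \tfrac{C}{N-1}\big(N\langle\psi,\N\psi\rangle\big)^{1/2}\langle\psi,\N^2\psi\rangle^{1/2}$, and on $\F_+^{\le M}$ the second factor is bounded by $\sqrt{M}\,\langle\psi,\N\psi\rangle^{1/2}$, giving the desired $\sqrt{M/N}$.
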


\begin{proof} The first estimate (\ref{eq:bound-ENt}) was already proved in \cite[Proposition 15]{LewNamSerSol-13}.

Now we consider the commutator $i[R_{1,N} (t),\N ]$. A straightforward computation similar to \eqref{eq:comm-pairing} shows that
\begin{align} \label{eq:comm-RN}
& i[R_{1,N}(t),\N] \nn \\
 &= - i a \big(Q(t) [w*|u(t)|^2]u(t)\big) \frac{\N_+(t) \, \sqrt{N-\N_+(t)}}{N-1} \nn\\
& +2i\bigg( \frac{\sqrt{(N-\N_+(t))(N-\N_+(t)-1)}}{N-1} -1 \bigg)\!  
\iint \overline{K_2(t,x,y)}a(x) a(y)\,\d x \,\d y \,  \nn\\
& + i\frac{\sqrt{N-\N_+(t)}}{N-1}  \iiiint  \big( \1 \otimes Q(t) \, w \, Q(t) \otimes Q(t) \big) (x,y; x',y')
\nn\\ &\hspace{5cm} \times
 u (t,x) a^\dagger(y) a(x')a(y')\,\d x\, \d y\, \d x'\, \d y' \nn\\
&+ \text{h.c.}
\end{align}
Recall that, in the last term, $w$ is understood as the multiplication operator by the function $w(x-y)$. 

We shall show that all terms in (\ref{eq:comm-RN}), when projected onto $\F_+(t)^{\le N}$, can be bounded by $C(\N+1)$. Let $v(t) = i Q(t) [w*|u(t)|^2] u(t)$. Since $\| v (t) \|_{L^2} \leq \| w * |u (t)|^2 \|_{L^\infty} \| u (t) \|_{L^2} \leq C$, the first term in (\ref{eq:comm-RN}) can be bounded, after projecting onto $\F_+(t)^{\le N}$, by 
\[ \pm \left\{ a (v(t)) \frac{\N_+(t) \, \sqrt{N-\N_+(t)}}{N-1}  + \text{h.c.}\right\}  \leq C (\N+1) \]
The second term on the right side of (\ref{eq:comm-RN}) can be treated similarly to the pairing term of $\bH(t)$ in (\ref{eq:comm-pairing}). As for the third term on the right side of (\ref{eq:comm-RN}), we write 
\begin{align}\label{eq:cub-1}
&\frac{\sqrt{N-\N_+}}{N-1} \iint\!\!\! \iiint \,\d x \,\d y \,\d y_1 \,\d x' \,\d y' Q_t (y,y_1) w (x-y_1) Q_t (x;x') Q_t (y_1;y') \nn\\ &\hspace{7cm} \times u (t,x) a^\dagger (y) a (x') a (y') \nn\\
&=\frac{\sqrt{N-\N_+}}{N-1} \iiint \,\d x \,\d y \,\d y'  w (x-y) Q_t (y;y') \, u (t,x) a^\dagger (Q_{t,y}) a (Q_{t,x}) a (y') \nn\\
&=\frac{\sqrt{N-\N_+}}{N-1} \iint \,\d x \,\d y\,  w (x-y)  \, u (t,x) a^\dagger (Q_{t,y}) a (Q_{t,x}) a (y) \\
&\hspace{.4cm} - \frac{\sqrt{N-\N_+}}{N-1} \iint \,\d x \,\d y\,    w (x-y) \, u (t,y) \, u (t,x) \, a^\dagger (Q_{t,y}) a (Q_{t,x}) a (u (t)) \nn
\end{align}
where we introduced the notation $Q_t (x,x')$ to denote the kernel of the operator $Q(t) = 1-|u (t) \rangle \langle u(t)|$, and where $Q_{t,x} (z) = Q_t (z,x)$. To bound the first term on the right side of (\ref{eq:cub-1}), we notice that, for arbitrary $\psi \in \F_+^{\le N}$, 
\[ \begin{split} 
\Big|  \iint & \d x \, \d y   \, w (x-y) \, u (t,x) \, \left\langle \psi,  \sqrt{N-\N_+} \, a^\dagger (Q_{t,y}) a (Q_{t,x}) a (y) \psi \right\rangle \Big| \\
&\leq \iint \d x\, \d y \, |w (x-y)| \, |u (t,x)| \left\| a(Q_{t,y}) \sqrt{N-\N_+} \psi \right\| \, \left\| a (y) a (Q_{t,x}) \psi \right\| .
\end{split} \]
By the Cauchy-Schwarz inequality we find
\begin{align*}
\Big|  \iint \d x\, &\d y   \, w (x-y) \, u (t,x) \, \left\langle \psi,  \sqrt{N-\N_+} \, a^\dagger (Q_{t,y}) a (Q_{t,x}) a (y) \psi \right\rangle \Big| \\
&\leq \iint \d x \d y \, |w (x-y)|^2 \, |u (t,x)|^2 \left\| a(Q_{t,y}) \sqrt{N-\N_+} \psi \right\|^2 \\
&\hspace{.4cm} + \iint \d x\, \d y \, \left\| a (y) a (Q_{t,x}) \psi \right\|^2 \\
&\leq C \int \d y  \left\| a(Q_{t,y}) \sqrt{N-\N_+} \psi \right\|^2 + \int dx \, \left\| a (Q_{t,x}) (\N+1)^{1/2} \psi \right\|^2 \\
&\leq CN \langle \psi, d\Gamma (Q (t)) \psi \rangle \leq C N \langle \psi , \N \psi \rangle .
\end{align*}
As for the second term on the right side of (\ref{eq:cub-1}), we have, for an arbitrary $\psi \in \F_+^{\le N}$, 
\[ \begin{split} 
\Big| \iint \d x\, \d y\, &w(x-y) \, u(t,x) u (t,y) \left\langle \psi, \sqrt{N-\N_+} \, a^\dagger (Q_{t,y}) a(Q_{t,x}) a(u_t) \psi \right\rangle \Big| \\
&\leq \iint \d x\, \d y \, |w (x-y)| |u (t,x)| |u (t,y)| \, \left\| a(Q_{t,y}) \sqrt{N-\N_+} \psi \right\| \\ &\hspace{7cm} \times  \left\| a(Q_{t,x}) a(u_t) \psi \right\| .
 \end{split} \]
By the Cauchy-Schwarz inequality again, we find
\[ \begin{split} 
\Big| \iint \d x \,\d y\, &w(x-y) \, u(t,x) u (t,y) \left\langle \psi, \sqrt{N-\N_+} \, a^\dagger (Q_{t,y}) a(Q_{t,x}) a(u_t) \psi \right\rangle \Big| \\
&\leq \iint \d x \d y \, |w (x-y)|^2 |u (t,x)|^2 \, \left\| a(Q_{t,y}) \sqrt{N-\N_+} \psi \right\|^2 \\ &\hspace{.4cm} +\iint \d x \d y \, |u (t,y)|^2 \,  \left\| a(Q_{t,x}) a(u_t) \psi \right\|^2 \\
&\leq C N \, \langle \psi, \cN \psi \rangle . \end{split} \]
Inserting in (\ref{eq:cub-1}), we conclude that the third term on the right side of (\ref{eq:comm-RN}), after projecting onto $\F_+^{\le N}$, is also bounded by $C (\cN+1)$.
\end{proof}

As an easy consequence of Lemma \ref{le:remainder-bound}, we have 
\begin{lemma}[Number of particles outside of the condensate] \label{le:N+-bound} There exists a constant $C$ such that
$$\langle \Phi_N(t), (\N+1)  \Phi_N(t) \rangle \le C e^{Ct} \langle \Phi_N(0), (\N+1) \Phi_N(0) \rangle.$$
\end{lemma}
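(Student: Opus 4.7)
The plan is to differentiate $\langle \Phi_N(t), (\N+1)\Phi_N(t)\rangle$ in time and apply a Gronwall argument. Recall that $\Phi_N(t)=U_N(t)\Psi_N(t)$ lives in the truncated Fock space $\F_+(t)^{\le N}$, and satisfies
\begin{equation*}
i\dot{\Phi}_N(t) = \bigl(\bH(t)+R_{1,N}(t)+R_{2,N}(t)\bigr)\Phi_N(t)
\end{equation*}
by \eqref{eq:equation_Phi_N_t}. Since $\N$ commutes with $U_N(t)U_N(t)^*=\1_+^{\le N}$ (both are diagonal in the particle-number decomposition), we may freely insert $\1_+^{\le N}$ on either side of the commutators.

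First I would observe that $R_{2,N}(t)$ defined in \eqref{eq:RN2} is a particle-number preserving operator, being of the form $a^\dagger a^\dagger a a$, and therefore commutes with $\N$. Consequently,
\begin{equation*}
\frac{d}{dt}\langle \Phi_N(t),(\N+1)\Phi_N(t)\rangle = \bigl\langle \Phi_N(t), i[\bH(t),\N]\Phi_N(t)\bigr\rangle + \bigl\langle \Phi_N(t), i[R_{1,N}(t),\N]\Phi_N(t)\bigr\rangle.
\end{equation*}
For the first term, Lemma \ref{le:Bog-Hamiltonian} gives $\pm i[\bH(t),\N]\le C(\N+1)$. For the second term, since $\Phi_N(t)=\1_+^{\le N}\Phi_N(t)$, I rewrite it as $\langle \Phi_N(t),\1_+^{\le N}i[R_{1,N}(t),\N]\1_+^{\le N}\Phi_N(t)\rangle$ and apply \eqref{eq:bound-coR} of Lemma \ref{le:remainder-bound} to bound it by $C\langle \Phi_N(t),(\N+1)\Phi_N(t)\rangle$.

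Combining, we get the differential inequality
\begin{equation*}
\frac{d}{dt}\langle \Phi_N(t),(\N+1)\Phi_N(t)\rangle \le C\langle \Phi_N(t),(\N+1)\Phi_N(t)\rangle,
\end{equation*}
and Gronwall's lemma yields the claimed bound. The only subtlety—which is really why the statement is non-trivial—is that $\bH(t)$, $R_{1,N}(t)$ and the evolved state are all unbounded objects, so one needs a regularization argument (for instance, work with $\N\wedge M$ in place of $\N$, use that $\Phi_N(t)\in\F_+^{\le N}$ so that everything is controlled by finite-rank truncations, and then pass to the limit $M\to\infty$) in order to justify the formal differentiation rigorously. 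This regularization is routine given the quadratic-form bounds already established in Lemmas \ref{le:Bog-Hamiltonian} and \ref{le:remainder-bound}, and the fact that $\Phi_N(t)$ stays in the form domain $Q(d\Gamma(1-\Delta))$ thanks to the $H^1$-regularity of the initial datum propagated by the full Schrödinger dynamics.
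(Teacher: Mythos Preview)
Your proof is correct and follows essentially the same approach as the paper: differentiate $\langle\Phi_N(t),(\N+1)\Phi_N(t)\rangle$, observe that $R_{2,N}(t)$ commutes with $\N$, bound the commutators $i[\bH(t),\N]$ and $i[R_{1,N}(t),\N]$ via Lemmas~\ref{le:Bog-Hamiltonian} and~\ref{le:remainder-bound}, and conclude by Gronwall. Your write-up is in fact slightly more explicit than the paper's, which compresses the argument and (apparently by a typo) drops the $i[\bH(t),\N]$ term in its displayed chain of equalities, even though the text invokes Lemma~\ref{le:Bog-Hamiltonian}.
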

\begin{proof} From the equation $i\dot{\Phi}_N(t)=(\bH(t)+R_N(t)) \Phi_N(t)$ and the commutator bounds in Lemma \ref{le:Bog-Hamiltonian} and Lemma \ref{le:remainder-bound}, we have 
\begin{align*} \frac{d}{dt} \langle \Phi_N(t), (\N +1) \Phi_N(t) \rangle &= \big\langle \Phi_N(t), i[\bH(t)+R_N(t),\N+1] \Phi_N(t) \big\rangle \\
&= \big\langle \Phi_N(t), i[R_{1,N}(t),\N] \Phi_N(t) \big\rangle \\
&\le C \langle \Phi_N(t), (\N +1) \, \Phi_N(t) \rangle.
\end{align*}
The desired bound then follows from Gronwall's inequality.
\end{proof}

\subsection{Proof of the main result}

Now we are able to prove Theorem \ref{thm:fluctuations_U_N} which itself implies Theorem~\ref{thm:fluctuations}. 

\begin{proof} We have
\begin{align}\label{eq:diff-1} 
\frac{d}{dt} &\| \Phi_{N}(t)-\Phi(t)\|^2 = 2\Re \Big\langle \Phi_{N}(t)-\Phi(t), \dot{\Phi}_{N}(t)-\dot \Phi(t)\Big\rangle \nn\\
&= -2 {\rm Im} \Big\langle \Phi_{N}(t)-\Phi(t), (\bH(t) +R_N(t)){\Phi}_{N}(t)-\bH(t) {\Phi}(t)\Big\rangle \nn\\
&= -2 {\rm Im} \Big\langle \Phi_{N}(t), R_N(t) {\Phi}(t)\Big\rangle \nn\\
&=-2  {\rm Im} \Big\langle \Phi_{N}(t), R_{1,N} (t) {\Phi}(t)\Big\rangle -2  {\rm Im} \Big\langle \Phi_{N}(t), R_{2,N} (t) {\Phi}(t)\Big\rangle  
\end{align}
where $R_{1,N}(t)$ and $R_{2,N}(t)$ are defined in (\ref{eq:RN1}) and (\ref{eq:RN2}), respectively. We shall show that both terms on the right side of (\ref{eq:diff-1}) tend to $0$ as $N\to \infty$. To this end, we denote by $\1_+^{\leq M}$ the orthogonal projection onto the truncated Fock space $\F_+(t)^{\leq M}$ and
$$\Phi(t)^{\le M}:=\1_+^{\le M} \Phi(t), \quad \Phi(t)^{> M}:=(\1-\1_+^{\le M}) \Phi(t) =\Phi(t) -\Phi(t)^{\le M}.$$ 
{\bf Estimate on the first remainder term.} We start by estimating the first term on the right side of (\ref{eq:diff-1}). Since $R_{1,N}$ can change the number of particles at most by two, for any $M \in \mathbb{N}$ we have
\[ \begin{split} 
\langle \Phi_N (t) , R_{1,N} (t) \Phi (t) \rangle = \, &\langle \Phi_N (t), R_{1,N} (t) \Phi (t)^{\leq M} \rangle + \langle \Phi_N (t), R_{1,N} (t) \Phi (t)^{> M} \rangle \\
= \; &\langle \Phi_N (t), \1_+^{\leq (M +2)} R_{1,N} (t) \1_+^{\leq (M+2)} \Phi (t)^{\leq M} \rangle \\ &+ \langle \Phi_N (t), \1^{\leq (N+2)} \, R_{1,N} (t)  \1^{\leq (N+2)} \Phi (t)^{> M} \rangle.
\end{split} \]
By Lemma \ref{le:remainder-bound}, we can choose $C>0$ so large that
\[ \begin{split} \1_+^{\leq (M +2)} R_{1,N} (t) \1_+^{\leq (M+2)} + C \sqrt{\frac{M}{N}} (\N+1) &\geq 0 \\ \1_+^{\leq (N+2)} R_{1,N} (t) \1_+^{\leq (N+2)} + C (\N+1) &\geq 0.
\end{split} \]
Then
\[ \begin{split} 
\langle \Phi_N &(t) , R_{1,N} (t) \Phi (t) \rangle \\ = \; &\left\langle \Phi_N (t), \left[ \1_+^{\leq (M +2)} R_{1,N} (t) \1_+^{\leq (M+2)}+ C \sqrt{\frac{M}{N}} \, (\N+1) \right] \Phi (t)^{\leq M} \right\rangle \\ &- C 
\sqrt{\frac{M}{N}} \, \left\langle \Phi_N (t), (\N+1) \Phi (t)^{\le M} \right\rangle \\
&+ \left\langle \Phi_N (t), \left[ \1^{\leq (N+2)} \, R_{1,N} (t)  \1^{\leq (N+2)} +C (\N+1) \right] \, \Phi (t)^{> M} \right\rangle \\ &- C \left\langle \Phi_N (t), (\N+1) \Phi (t)^{> M} \right\rangle.
\end{split} \]
By the Cauchy-Schwarz inequality, we estimate
\[ \begin{split}
\big| \langle &\Phi_N (t) , R_{1,N} (t) \Phi (t) \rangle \big| \\ \leq \; &\left\langle \Phi_N (t), \left[ \1_+^{\leq (M +2)} R_{1,N} (t) \1_+^{\leq (M+2)}+ C \sqrt{M/N} \, (\N+1) \right] \Phi_N (t) \right\rangle^{1/2} \\
&\hspace{.1cm} \times 
\left\langle \Phi^{\le M} (t), \left[ \1_+^{\leq (M +2)} R_{1,N} (t) \1_+^{\leq (M+2)}+ C \sqrt{M/N} \, (\N+1) \right] \Phi^{\le M} (t) \right\rangle^{1/2} \\
&+  \left\langle \Phi_N (t), \left[ \1^{\leq (N+2)} \, R_{1,N} (t)  \1^{\leq (N+2)} +C (\N+1) \right] \, \Phi_N (t) \right\rangle^{1/2} \\
&\hspace{.1cm} \times 
\left\langle \Phi (t)^{> M}, \left[ \1^{\leq (N+2)} \, R_{1,N} (t)  \1^{\leq (N+2)} +C (\N+1) \right] \, \Phi (t)^{> M} \right\rangle^{1/2} \\
&+C \sqrt{M/N} \, \left\langle \Phi_N (t), (\N+1) \Phi_N (t) \right\rangle^{1/2} \left\langle \Phi (t)^{\leq M}, (\N+1) \Phi (t)^{\leq M} \right\rangle^{1/2} \\
&+C \, \left\langle \Phi_N (t), (\N+1) \Phi_N (t) \right\rangle^{1/2} \left\langle \Phi (t)^{> M}, (\N+1) \Phi (t)^{> M} \right\rangle^{1/2} .
\end{split} \]
Using again Lemma \ref{le:remainder-bound} we find
\[ \begin{split} 
\big| \langle \Phi_N &(t) , R_{1,N} (t) \Phi (t) \rangle \big| \\ \leq \; &C \sqrt{M/N} \, \left\langle \Phi_N (t), (\N+1) \Phi_N (t) \right\rangle^{1/2}  \left\langle \Phi (t)^{\le M}, (\N+1) \Phi^{\le M} (t) \right\rangle^{1/2} \\ &+C \left\langle \Phi_N (t), (\N+1) \Phi_N (t) \right\rangle^{1/2}  \left\langle \Phi (t)^{> M}, (\N+1) \Phi^{> M} (t) \right\rangle^{1/2}. \end{split} \]
From Lemma \ref{le:N+-bound}, we conclude that
\[ \begin{split} 
\big| \langle \Phi_N &(t) , R_{1,N} (t) \Phi (t) \rangle \big| \\ \leq \; &C e^{Ct} \sqrt{M/N} \, \left\langle \Phi_N (0), (\N+1) \Phi_N (0) \right\rangle^{1/2}  \left\langle \Phi (0), (\N+1) \Phi (0) \right\rangle^{1/2} \\ &+C e^{Ct} \left\langle \Phi_N (0), (\N+1) \Phi_N (0) \right\rangle^{1/2}  \left\langle \Phi (t)^{> M}, (\N+1) \Phi^{> M} (t) \right\rangle^{1/2}. \end{split} \]
Letting first $N \to \infty$ and at the end $M \to \infty$, we obtain that 
\bq \label{eq:last-RN1}
\lim_{N\to \infty}\langle \Phi_N (t) , R_{1,N} (t) \Phi (t) \rangle= 0. 
\eq
{\bf Estimate on the second remainder term.} Next, we consider the second term on the right side of (\ref{eq:diff-1}). Using the shorthand notation $A = (Q(t) \otimes Q(t)) \, w \, (Q(t) \otimes Q(t))$, we have  
\[ \begin{split} \Big\langle \Phi_{N}(t), R_{2,N} (t) {\Phi}(t)\Big\rangle = \; &\frac{1}{N-1}\sum_{n\leq N}  \sum_{i<j}^n \left\langle \Phi^{(n)}_N (t), A_{ij} \, \Phi^{(n)} (t) \right\rangle \\ = \; &\frac{1}{N-1}\sum_{n \leq M} \frac{n(n-1)}{2} \, \left\langle \Phi^{(n)}_N (t), A_{12} \, \Phi^{(n)} (t) \right\rangle \\ &+ \frac{1}{N-1} \sum_{M \leq n \leq N} \frac{n(n-1)}{2} \, \left\langle \Phi^{(n)}_N (t), A_{12} \, \Phi^{(n)} (t) \right\rangle
\end{split}\]
where $A_{ij}$ denotes the two-particle operator $A$ acting on particles $i$ and $j$, and where we used the permutation symmetry of $\Phi_N (t)$ and $\Phi (t)$. By the Cauchy-Schwarz inequality, we find
\begin{align}\label{eq:quart-bd} 
\Big|  \Big\langle \Phi_{N}(t), R_{2,N} (t) {\Phi}(t)\Big\rangle \Big| \leq  &\frac{M^{3/2}}{N} \left( \sum_{n =0}^\infty n \| A_{12} \Phi^{(n)} (t) \|_2^2 \right)^{1/2} \\ & + \langle \Phi_N (t), \N \Phi_N (t) \rangle^{1/2}  \left( \sum_{n =M}^\infty n \| A_{12} \Phi^{(n)} (t) \|_2^2 \right)^{1/2}.\nn
\end{align}
Next, we observe that
\[ \begin{split} \| A_{12} \Phi^{(n)} (t) \|^2 &= \Big\langle \Phi^{(n)} (t), A_{12}^* A_{12} \Phi^{(n)} (t) \Big\rangle \\ &= \Big\langle \Phi^{(n)} (t),  Q_1 (t) Q_2 (t) \, w (x_1 - x_2) Q_1 (t) Q_2 (t) \\ &\hspace{4cm} \times w (x_1 - x_2) Q_1 (t) Q_2 (t) \Phi^{(n)} (t) \Big\rangle \\ 
&\leq \Big\langle  \Phi^{(n)} (t),  Q_1 (t) Q_2 (t) \, w^2 (x_1 - x_2) Q_1 (t) Q_2 (t) \Phi^{(n)} (t) \Big\rangle
\end{split} \]
where $Q_1 (t)$ and $Q_2 (t)$ denote the orthogonal projection $Q (t)$ acting on the first and the second particles, respectively. Using the assumption $w^2 \leq C(1-\Delta)$, we find
\[ \begin{split} 
 \| A_{12} \Phi^{(n)} (t) \|^2 
&\leq C \Big\langle  \Phi^{(n)} (t),  Q_1 (t) Q_2 (t) \, (1-\Delta_{x_1}) Q_1 (t) Q_2 (t) \Phi^{(n)} (t) \Big\rangle \\
&\leq C \Big\langle \Phi^{(n)} (t),  Q_1 (t)  \, (1-\Delta_{x_1}) Q_1 (t) \Phi^{(n)} (t) \Big\rangle.
\end{split} \]
We write $Q_1 (t) = \1 - P_1 (t)$, with $P (t) = |u(t) \rangle \langle u(t)|$ and $P_1 (t)$ denotes the orthogonal projection $P(t)$ acting on the first particle. We easily find
\[ \begin{split} 
\| A_{12} \Phi^{(n)} (t) \|^2  &\leq C \Big\langle \Phi^{(n)} (t), (1-\Delta_{x_1})  \Phi^{(n)} (t) \Big\rangle + \| u (t) \|_{H^1}^2 \| \Phi^{(n)} (t) \|_2^2 \\ &\leq C \Big\langle \Phi^{(n)} (t), (1-\Delta_{x_1})  \Phi^{(n)} (t) \Big\rangle.
\end{split} \]
Inserting the latter estimate into (\ref{eq:quart-bd}) and using Theorem \ref{thm:Bogoliubov-dynamics} and Lemma \ref{le:N+-bound}, we obtain
\begin{align*} 
&\Big|  \Big\langle \Phi_{N}(t), R_{2,N} (t) {\Phi}(t)\Big\rangle \Big| \nn\\
\leq  &\frac{M^{3/2}}{N} \Big\langle \Phi (t) , d\Gamma (1-\Delta) \Phi (t) \Big\rangle^{1/2} \nn\\ &+  \Big\langle \Phi_N (t), \N \Phi_N (t) \Big\rangle^{1/2} \Big\langle \Phi (t)^{>M} , d\Gamma (1-\Delta) \Phi (t)^{>M} \Big\rangle^{1/2}\nn \\
\leq \; & C e^{Ct} \frac{M^{3/2}}{N}   \Big\langle \Phi (0) , d\Gamma (1-\Delta) \Phi (0) \Big\rangle^{1/2} \nn\\ &+ C e^{Ct} \Big\langle \Phi_N (0), \N \Phi_N (0) \Big\rangle^{1/2} \Big\langle \Phi (t)^{>M} , d\Gamma (1-\Delta) \Phi (t)^{>M} \Big\rangle^{1/2}.
\end{align*}
Under the assumption (\ref{eq:cv-initial-datum}), we have 
$$\Big\langle \Phi (0) , d\Gamma (1-\Delta) \Phi (0) \Big\rangle \le C_0 \quad{\rm and}~~\Big\langle \Phi_N (0), \N \Phi_N (0) \Big\rangle \le C_0$$
where $C_0$ is dependent on the initial data, but independent of $N$. Therefore,
\begin{align*} 
&\Big|  \Big\langle \Phi_{N}(t), R_{2,N} (t) {\Phi}(t)\Big\rangle \Big| \nn\\
\leq \; & C e^{Ct} \frac{M^{3/2}}{N} +C e^{Ct}   \Big\langle \Phi (t)^{>M} , d\Gamma (1-\Delta) \Phi (t)^{>M} \Big\rangle^{1/2} .
\end{align*}
For every $t>0$ fixed, since $\langle \Phi (t), d\Gamma (1-\Delta) \Phi (t) \rangle \le C$ by Theorem \ref{thm:Bogoliubov-dynamics}, it follows that 
\bqq 
\lim_{M\to \infty} \langle \Phi^{>M} (t), d\Gamma (1-\Delta) \Phi^{>M} (t) \rangle= 0.
\eqq
Thus letting $N \to \infty$ and afterwards $M \to \infty$, we can conclude that
\bq \label{eq:last-RN2}
\lim_{N\to \infty} \Big\langle \Phi_{N}(t), R_{2,N} (t) {\Phi}(t)\Big\rangle =0 .
\eq
{\bf Conclusion.} Substituting (\ref{eq:last-RN1}) and (\ref{eq:last-RN2}) into (\ref{eq:diff-1}), we obtain
$$ \lim_{N\to \infty} \frac{d}{dt} \| \Phi_{N}(t)-\Phi(t)\|^2 =0.$$
Since $\lim_{N\to \infty} \| \Phi_{N}(0)-\Phi(0)\|^2=0$ by the assumption (\ref{eq:cv-initial-datum}), we conclude that 
$$\lim_{N\to \infty} \| \Phi_{N}(t)-\Phi(t)\|^2=0$$
for every $t\ge 0$ fixed. The proof is complete.
\end{proof}

\appendix

\section{Dynamics generated by quadratic forms} \label{app:dynamics}

In this appendix, we discuss the time-dependent Schr\"odinger equation
$$  i\dot{x}(t)= H(t)x(t)$$
where $H(t)$ is a family of symmetric quadratic forms in an abstract Hilbert space $\cH$. Recall that for a given self-adjoint operator $A \ge 1$ on $\cH$, we have the scale of spaces $Q(A)\subset \cH = \cH^* \subset Q(A)^*$, where $Q(A)$ is the quadratic form domain of $A$. Theorem \ref{thm:dynamics-unbounded} gives sufficient conditions on $H(t)$ and $A$ such that the Schr\"odinger equation $  \dot{x}(t)= -iH(t)x(t)$ can be solved in $Q(A)^*$. The proof below follows closely Simon's proof \cite[Theorem II.27]{Simon-71}, which goes back to Yoshida's argument.

\begin{proof}[Proof of Theorem \ref{thm:dynamics-unbounded}] We shall always denote by $\|.\|$ and $\langle ., . \rangle$ the norm and the inner product of $\cH$, and denote by $\dot{H}(t)$ the time derivative of $H(t)$, that is,
$$\frac{d}{dt} \langle x, H(t) x \rangle = \langle x,\dot{H}(t) x\rangle\quad~{\rm for~all~}x\in Q(A).$$

\medskip

\noindent {\bf Step 1: Construction of $x(t)$}. For every $n\in \mathbb{N}$, we introduce 
$$ H_n(t) :=P_n H(t)P_n~~{\rm with}~P_n:=(1+A/n)^{-1}.$$
Since $-CA \le  H(t) \le CA$, $H_n(t)$ can be extended to a bounded operator on $\cH$ with operator norm $\|H_n(t)\|\le Cn^{2}$. Moreover, since $-CA\le \dot{H}(t) \le CA$, the mapping $t\mapsto H_n(t)$ is Lipschitz continuous in the operator topology. Therefore, from every $x_0\in Q(A)\subset \cH$, we can find a unique evolution $x_n(t)\in \cH$ such that $x_n(0)=x_0$ and 
$$i\dot{x}_n(t)=H_n (t) x_n(t).$$

Now we estimate the energy $\langle x_n(t), A x_n(t) \rangle$. Let us introduce
$$ A_n(t):=P_n (H(t)+B)P_n=H_n(t)+ P_n B P_n \ge P_n A P_n.$$
Since $P_n$ commutes with $B$, we have
\begin{multline}
\frac{d}{dt} \langle x_n(t), A_n(t) x_n(t) \rangle = \langle x_n(t), \dot{A}_n(t) x_n(t) \rangle + \langle x_n(t), -i[A_n(t), H_n(t)] x_n(t) \rangle \hfill\\
=  \langle x_n(t), \dot{H}(t) x_n(t) \rangle + \langle x_n(t), -iP_n [B, H(t)] P_n x_n(t) \rangle .
\end{multline}
Moreover, since $$\dot{H}_n(t)\le CP_n A P_n \le CA_n(t)~~{\rm and}~-iP_n[B,H(t)]P_n\le CP_n A P_n\le CA_n(t)$$ we find that 
\bqq
\frac{d}{dt} \langle x_n(t), A_n(t) x_n(t) \rangle \le C \langle x_n(t), A_n(t) x_n(t) \rangle.
\eqq
From Gronwall's inequality, it follows that
\bqq
\langle x_n(t), A_n(t) x_n(t) \rangle \le  e^{Ct} \langle  x_0, A_n(0) x_0 \rangle \le  e^{Ct} \langle  x_0, P_n A P_n x_0 \rangle.
\eqq
Since $A_n(t) \ge P_n A P_n$ and $P_n \to 1$ strongly in $Q(A)$ as $n\to \infty$, we obtain
\bqq
 \limsup_{n\to \infty}\langle x_n(t), A x_n(t) \rangle \le e^{Ct} \langle x_0, A x_0 \rangle.
\eqq
Thus for $n$ large, $x_n(t)$ is bounded in $Q(A)$ uniformly in $n$ and $t$. Moreover,
\bqq
\langle \dot{x}_n(t), A^{-1} \dot{x}_n(t) \rangle &=& \big\langle  x_n(t) , H_n(t) A^{-1} H_n(t) x_n(t) \big\rangle \hfill\\
&=& \Big\langle A^{1/2} x_n(t) , (A^{-1/2}H_n(t) A^{-1/2})^2 A^{1/2} x_n(t) \Big\rangle
\eqq
Note that $\|A^{-1/2}H_n(t) A^{-1/2}\| \le C$ since $\|A^{-1/2}H(t) A^{-1/2}\| \le C$ and $P_n$ commutes with $A$. Therefore, we can conclude 
\bqq
\limsup_{n\to \infty }\langle \dot{x}_n(t), A^{-1} \dot{x}_n(t) \rangle &\le & \limsup_{n\to \infty} \|A^{1/2}x_n(t)\|^2 \| A^{-1/2}H_n(t) A^{-1/2}\|^2 \hfill\\
&\le & C e^{Ct}\|x_0\|_{Q(A)}^2 .
\eqq
Thus for $n$ large, $\dot{x}_n(t)$ is bounded in $Q(A)^*$ uniformly in $n$ and $t$. Consequently, up to a subsequence, we can assume that $x_n(t)$ converges to a limit $x(t)$ weakly in $L^\ii(0,1;Q(A))$ and that $\dot{x}_n(t)$ converges to $\dot{x}(t)$ weakly in $L^\ii(0,1;Q(A)^*)$. By~\cite{Lions-58}, we know that $x(t)\in C^0([0,1],\cH)$ and that $x(0)=\lim_{n\to\ii}x_n(0)=x_0$. Moreover, we have
$$\|x(t)\|_{Q(A)}^2 \le \liminf_{n\to \infty} \|x_n(t) \|_{Q(A)}^2 \le C e^{Ct}\|x_0\|_{Q(A)}^2.$$
On the other hand,  
\bqq
i\dot{x}_n(t)= H_n(t) x_n(t)= A^{1/2}P_n (A^{-1/2} H(t)A^{-1/2})P_n A^{1/2}x_n(t) \wto H(t) x(t)
\eqq
weakly in $Q(A)^*$ because $x_n(t) \wto x(t)$ weakly in $Q(A)$, $A^{-1/2}H(t)A^{-1/2}$ is bounded, and $P_n$ commutes with $A$ and converges to $1$ strongly in $Q(A)$. Therefore, by passing to the weak limit we obtain the desired equation
$$ i\dot{x}(t)= H(t) x(t)\quad{\rm in}~~Q(A)^*.$$

\medskip

\noindent{\bf Step 2: Uniqueness.} We need to show that if $x(t)\in L^\ii(0,1;Q(A))$ solves
$$
\left\{ \begin{gathered}
   i\dot{x}(t)= H(t)x(t)\quad {\rm in}~~Q(A)^*,\hfill \\
   x(0)=0\hfill \\ 
\end{gathered}  \right.
$$
then $x(t) \equiv 0$. Let $A(t):=H(t)+B \ge A$. Using the assumptions $\dot{H}(t)\le CA$ and $-i[H(t),B]\le CA$ we obtain
$$A(t)^{-1}\dot{A}(t) A(t)^{-1}=A(t)^{-1}\dot{H}(t) A(t)^{-1} \le CA(t)^{-1}A A(t)^{-1}\le CA(t)$$and 
$$-i[ A^{-1}(t),H(t)]=-A(t)^{-1}i[H(t),B]A(t)^{-1} \le CA(t)^{-1}A A(t)^{-1} \le CA(t).$$
Therefore, 
\begin{align*}
   \frac{d}{dt} \langle x(t), A^{-1}(t) x(t)\rangle &= \langle x(t), A(t)^{-1}\dot{A}(t) A(t)^{-1} x(t)\rangle \\
&~ \qquad + \langle x(t), -i[ A^{-1}(t),H(t)] x(t)\rangle \\
 &\le   C \langle x(t), A^{-1}(t) x(t)\rangle .
\end{align*}
If $x(0)=0$, then from Gronwall's inequality it follows that $x(t) \equiv 0$.
\end{proof}

\section{Derivation of \eqref{eq:UNHNUN-original}} \label{app:UN-HN-UN*}

For the reader's convenience, in this appendix we give a detailed calculation leading to \eqref{eq:UNHNUN-original}. 

Let us choose an orthonormal basis $\{v_n\}_{n=0}^\infty$ for $\gH$ with $v_n \in H^1(\R^d)$ for all $n\ge 0$ and $v_0=u(t)$. We shall denote $U_N=U_N(t)$, $\cN_+=\cN_+(t)$ and $a_n=a(v_n)$ for short. Similarly to \eqref{eq:2nd-dG1} and \eqref{eq:2nd-dG2}, we have 
\begin{align} \label{eq:2nd-dG1-af}
\bigoplus_{n\ge 1} \sum_{i=1}^n T_i = \dGamma(T)= \sum_{m,n\ge 0} T_{mn}a_m^\dagger a_n
\end{align}
and
\begin{align} \label{eq:2nd_dG2-af}
\bigoplus_{n\ge 2} \left( \sum_{1\le i<j \le n } w_{ij}   \right) = \frac{1}{2}\sum_{m,n,p,q\ge 0} W_{mnpq} a_m^\dagger (x)a_n^\dagger a_p a_q
\end{align}
where
\begin{align*}
T_{m,n} &= \langle u_m, -\Delta u_n \rangle = \int_{\R^d}  \overline{\nabla u_m (x)}\cdot \nabla u_n (x) dx, \\
W_{mnpq} &= \iint_{\R^d \times \R^d} u_m(x)u_n(y) w(x-y) u_p(x) u_q(y) dx dy.
\end{align*}
Now we compute $U_N H_N U_N^*$ using \eqref{eq:2nd-dG1-af} and \eqref{eq:2nd_dG2-af}:
\begin{align} \label{eq:UN-HN-UN-2nd-quantization-af}
U_N H_N U_N^* &= \sum_{m,n\ge 0} T_{mn} U_N a_m^\dagger a_n U_N^* \nn\\
&\quad +\frac{1}{2(N-1)} \sum_{m,n,p,q\ge 0} W_{mnpq} U_N a_m^\dagger (x)a_n^\dagger a_p a_q U_N^*.
\end{align}
For the kinetic part, we have immediately from \eqref{eq:UN-actions-1}-\eqref{eq:UN-actions-4} that
\begin{align*} 
U_N a_0^\dagger a_0 U_N^* &= N-\N_+,\\
U_N a_0^\dagger a_n U_N^* &= \sqrt{N-\cN_+} a_n ,\\
U_N a_m^\dagger a_0 U_N^* &= a_m^\dagger \sqrt{N-\cN_+},\\
U_N a_m^\dagger a_n U_N^* &= a_m^\dagger a_n.
\end{align*}
For the interaction part, we first use CCR \eqref{eq:CCR-af-ag} to get
\begin{align*}
U_N a_m^\dagger a_n^\dagger a_p a_q U_N^* &= U_N (a_m^\dagger a_p a_n^\dagger a_q - \delta_{np} a_m^\dagger a_q U_N^* \\
&= \Big( U_N a_m^\dagger a_p U_N^* \Big) \Big( U_N a_n^\dagger a_q U_N^* \Big) - \delta_{np}  U_N a_m^\dagger a_q U_N^* 
\end{align*}
for all $m,n,p,q\ge 0$ and then apply \eqref{eq:UN-actions-1}-\eqref{eq:UN-actions-4}. It results
\begin{align*}
U_N a_0^\dagger a_0^\dagger a_0 a_0 U_N^* &= (N-\cN_+)^2 - (N-\cN_+) \\
&= (N-\cN_+)(N-\cN_+-1) \\
U_N a_0^\dagger a_0^\dagger a_0 a_q U_N^* & =   (N-\cN_+)^{3/2} a_q - \sqrt{N-\cN_+} a_q \\
&= (N-\cN_+-1) \sqrt{N-\cN_+} a_q \\
U_N a_0^\dagger a_0^\dagger a_p a_q U_N^* &= \sqrt{N-\cN_+} a_p \sqrt{N-\cN_+} a_q \\
&= \sqrt{N-\cN_+}\sqrt{N-\cN_+ - 1} a_p a_q \\
U_N a_0^\dagger a_n^\dagger a_0 a_q U_N^* &= (N-\cN_+)a_n^\dagger a_q \\
U_N a_0^\dagger a_n^\dagger a_p a_q U_N^* & = \sqrt{N-\cN_+} a_p a_n^\dagger a_q - \delta_{np} \sqrt{N-\cN_+} a_q \\
&= \sqrt{N-\cN_+} a_n^\dagger a_p  a_q \\
U_N a_m^\dagger a_n^\dagger a_p a_q U_N^* &= a_m^\dagger a_p a_n^\dagger  a_q - \delta_{np} a_m^\dagger a_q = a_m^\dagger a_n^\dagger a_p a_q
\end{align*}
for all $m,n,p,q \ge 1$. All in all, inserting the above identities (and their adjoints) into  \eqref{eq:UN-HN-UN-2nd-quantization-af} we find that (c.f. eq. (4.8) in \cite{LewNamSerSol-13})

\begin{align} \label{eq:UN-HN-UN-a}
U_N H_N U_N^* &= T_{00} (N-\cN_+) + \frac12 W_{0000}\frac{(N-\cN_+)(N-\cN_+-1)}{N-1} \nn\\
& + \Big[ \sum_{m\ge 1}\left( T_{0m} + W_{000m} \frac{N-\N_+ -1} {N-1} \right) \sqrt{N-\N_+} a_m +{\rm h.c.} \Big]\nn\\
&+\sum_{m,n\ge 1}  \Big( W_{0m0n}+ W_{0mn0} \Big) \frac{N-\N_+} {N-1} a_m^\dagger a_n  \nn\\
&+ \Big[ \frac{1}{2} \sum_{m,n\ge 1} W_{mn00}  a_m^\dagger a_n^\dagger \frac{\sqrt{(N-\N_+)(N-\N_+-1)}}{N-1} + {\rm h.c.} \Big],\nn\\
&+ \Big[ \frac{1}{N-1} \sum_{m,n,p \ge 1} W_{0npq} \sqrt{N-\N_+} a_n^\dagger a_p a_q+{\rm h.c.} \Big], \nn\\
& +  \sum_{m,n\ge 1} T_{mn} a_m^* a_n + \frac{1}{2(N-1)} \sum_{m,n,p,q \ge 1} W_{mnpq} a_m ^* a_n ^* a_p a_q.
\end{align}
Here we have used $W_{000m}=W_{00m0}$, $W_{0m0n}=W_{m0n0}$ and $W_{0mn0}=W_{m00n}$. Also, we have written $\Big[ X+{\rm h.c.} \Big]$ instead of $\Big[ X+X^* \Big]$ for short. 

Now we show that the right side of \eqref{eq:UN-HN-UN-a} coincides with the right side of \eqref{eq:UNHNUN-original}, where the latter has the advantage that there is no need to use any basis $\{v_n\}_{n=0}^\infty$ in its representation. First, 
\begin{align} \label{eq:sum-1}
&\quad T_{00} (N-\cN_+) + \frac12 W_{0000}\frac{(N-\cN_+)(N-\cN_+-1)}{N-1} \nn\\
&= \Big( T_{00} + \frac{1}{2}W_{0000} \Big) N - \Big( T_{00} + W_{0000} \Big) \cN_+ + W_{0000}\frac{\cN_+(\cN_+-1)}{2(N-1)}\nn\\
&= e(t) N - \Big( e(t) + \mu(t) \Big) \cN_+ + \mu(t) \frac{\cN_+(\cN_+-1)}{N-1} \nn\\
&= e(t) N - \Big( e(t) + \mu(t) \Big) \dGamma(Q(t)) + \mu(t) \dGamma(Q(t)) \frac{\cN_+-1}{N-1}
\end{align} 
Here we have used $e(t)=T_{00}+\frac{1}{2}W_{0000}$, $\mu(t)= \frac{1}{2}W_{0000}$ and $\cN_+=\dGamma(Q(t))$. Next, similarly to \eqref{eq:2nd-dG1-af}, we get
\begin{align*}
\sum_{m,n\ge 1} T_{mn} a_m^\dagger a_n &= \sum_{m,n\ge 1}  \langle u_m, T u_n \rangle a_m^* a_n  \nn\\
&= \sum_{m,n\ge 0}  \langle u_m, Q(t)T Q(t) u_n \rangle a_m^* a_n   = \dGamma \Big( Q(t) T Q(t)\Big).
\end{align*}
and
\begin{align*}
\sum_{m,n\ge 1}  \Big( W_{0m0n}+ W_{0mn0} \Big) a_m^\dagger a_n &=  \sum_{m,n\ge 1} \langle u_m, [|u(t)|^2*w + K_1(t)] u_n \rangle  a_m^\dagger a_n \\
& =  \dGamma(Q(t) [|u(t)|^2*w + K_1(t)]Q(t)).
\end{align*}
Therefore, 
\begin{align} \label{eq:sum-2}
&\quad\sum_{m,n\ge 1} T_{mn} a_m^\dagger a_n + \sum_{m,n\ge 1}  \Big( W_{0m0n}+ W_{0mn0} \Big) \frac{N-\cN_+}{N-1}a_m^\dagger a_n \nn\\
&= \dGamma \Big( Q(t) T Q(t)\Big)+  \dGamma(Q(t) [|u(t)|^2*w + K_1(t)]Q(t))\nn\\
&= \dGamma \Big( Q(t) [h(t)+K_1(t)+\mu(t)]Q(t)\Big) \frac{N-\cN_+}{N-1} \nn\\
&\quad +   \dGamma(Q(t) [|u(t)|^2*w + K_1(t)]Q(t)) \frac{1-\cN_+}{N-1}.
\end{align}
On the other hand, from
\begin{align*}
\sum_{m\ge 1} T_{0m} a_m &= \sum_{m\ge 1} \langle u(t), T v_m \rangle a(v_m) \\
&= a\Big( \sum_{m\ge 1} \langle v_m, T u(t) \rangle v_m\Big)= a\Big( Q(t) (-\Delta) u(t)\Big) 
\end{align*}
and 
\begin{align*}
\sum_{m\ge 1} W_{000m} a_m = \sum_{m\ge 1} \langle u(t), (|u(t)|^2*w) u_m \rangle a_m =  a\Big( Q(t) [|u(t)|^2*w] u(t)\Big),
\end{align*}
it follows that
\begin{align} \label{eq:sum-3}
&\sum_{m\ge 1}\left( T_{0m} + W_{000m} \frac{N-\N_+ -1} {N-1} \right) \sqrt{N-\N_+} a_m \nn\\
&= \sqrt{N-\N_+} \sum_{m\ge 1}\left( T_{0m} + W_{000m} \right) a_m - \frac{\cN_+\sqrt{N-\N_+}}{N-1} \sum_{m\ge 1} W_{000m} a_m \nn \\
&= \sqrt{N-\N_+} a\Big( Q(t)h(t)u(t)\Big)-\frac{\cN_+\sqrt{N-\N_+}}{N-1} a\Big( Q(t) [|u(t)|^2*w] u(t)\Big).
\end{align}
Here in the first term of the right side of \eqref{eq:sum-3}, we have used $$Q(t)[-\Delta+|u(t)|^2*w]u(t)=Q(t)h(t)u(t),$$
which follows from the definition $h(t)=-\Delta+|u(t)|^2*w-\mu(t)$ and $Q(t)u(t)=0$. Finally, thanks to the relations \eqref{eq:af-ax}, we have the equivalent representations
\begin{align} \label{eq:sum-4}
&\quad \frac{1}{2} \sum_{m,n\ge 1} W_{mn00}  a_m^\dagger a_n^\dagger \frac{\sqrt{(N-\N_+)(N-\N_+-1)}}{N-1} \nn\\
&= \frac{1}{2} \iint dx\, dy\, K_2(t,x,y)a^\dagger(x) a^\dagger(y)   \frac{\sqrt{(N-\N_+)(N-\N_+-1)}}{N-1} ,\\
\label{eq:sum-5}
&\frac{\sqrt{N-\N_+}}{N-1} \sum_{m,n,p \ge 1} W_{0npq}  a_n^\dagger a_p a_q \nn\\
&=\frac{\sqrt{N-\N_+(t)}}{N-1} \iiiint  \, \big( \1 \otimes Q(t) \, w \, 
Q(t) \otimes Q(t) \big) (x,y; x', y') \times \nn\\ & \quad\quad\quad\quad\quad\quad\quad\quad\quad\quad\quad\quad\quad \times u (t,x) \, a^\dagger(y) a(x')a(y')\,\d x \, \d y \, \d x'\, \d y'
\end{align}
and
\begin{align} \label{eq:sum-6}
&\frac{1}{2(N-1)} \sum_{m,n,p,q \ge 1} W_{mnpq} a_m ^* a_n ^* a_p a_q \nn\\
&= \frac{1}{2(N-1)} \iiiint  \big( Q(t)\otimes Q(t) \, w \, Q(t) \otimes Q(t) \big) (x,y; x',y') \nn \\ & \quad\quad\quad\quad\quad\quad\quad\quad\quad\quad\quad \times a^\dagger(x) a^\dagger(y) a(x')a(y')\,\d x\, \d y\, \d x'\, \d y' .
\end{align}
Inserting \eqref{eq:sum-1}-\eqref{eq:sum-6} into the right side of \eqref{eq:UN-HN-UN-a}, we obtain   
\begin{align*}
U_N H_N U_N^* & = e(t) N - \dGamma \Big( Q(t) [h(t)+K_1(t)-e(t)]Q(t)\Big) \\ 
& + \Big[ \sqrt{N-\N_+} a\Big( Q(t)h(t)u(t)\Big) + {\rm h.c.} \Big] \\
& + \dGamma(Q(t) [|u(t)|^2*w + K_1(t)-\mu(t)]Q(t)) \frac{1-\cN_+}{N-1}  \\
& - \Big[ \frac{\cN_+\sqrt{N-\N_+}}{N-1} a\Big( Q(t) [|u(t)|^2*w] u(t)\Big) + {\rm h.c.} \Big]\\\
& + \Big[ \frac{1}{2} \iint  K_2(t,x,y)a^\dagger(x) a^\dagger(y)\,\d x\, \d y\, \\
& \quad \quad \quad \quad \quad \quad \times    \frac{\sqrt{(N-\N_+)(N-\N_+-1)}}{N-1} + {\rm h.c.} \Big] \\
& + \Big[ \frac{\sqrt{N-\N_+(t)}}{N-1} \iiiint  \, \big( \1 \otimes Q(t) \, w \, 
Q(t) \otimes Q(t) \big) (x,y; x', y') \nn\\ & \quad\quad\quad\quad\quad\quad\quad \times u (t,x) \, a^\dagger(y) a(x')a(y')\,\d x\, \d y\, \d x'\, \d y' + {\rm h.c.} \Big] \\
& + \frac{1}{2(N-1)} \iiiint  \big( Q(t)\otimes Q(t) \, w \, Q(t) \otimes Q(t) \big) (x,y; x',y') \nn \\ & \quad\quad\quad \quad\quad \quad\quad \quad \times a^\dagger(x) a^\dagger(y) a(x')a(y')\,\d x\, \d y\, \d x'\, \d y' 
 \end{align*}
which is exactly \eqref{eq:UNHNUN-original}. 



\end{document}